\def\centerarc[#1](#2)(#3:#4:#5)% Syntax: [draw options] (center) (initial angle:final angle:radius)
\newtheorem{fact}{Fact}
\newtheorem{rrule}{Reduction Rule}{\bf}{\it}
\newtheorem{brule}{Branching Rule}{\bf}{\it}
\newcommand{\NP}{\mathsf{NP}}
\def\MC{{\sc mc}} %matching cut}}
\def\PMC{{\sc pmc}} %perfect matching cut}}
\def\3SAT{{\sc 3-sat}}
\def\NAE3SAT{{\sc positive nae 3-sat}}
\def\1IN3SAT{{\sc positive 1-in-3 3-sat}}
\def\P1IN3SAT{{\sc positive planar 1-in-3 3-sat}}
\begin{document}

\title{
The Perfect Matching Cut Problem Revisited
}

\author{Van Bang Le\inst{1} \and Jan Arne Telle\inst{2}
}
\institute{
Institut f\"{u}r Informatik, Universit\"{a}t Rostock,
Rostock, Germany\\ 
\email{van-bang.le@uni-rostock.de}
\and
Department of Informatics, University of Bergen,
N-5020 Bergen, Norway\\
\email{Jan.Arne.Telle@uib.no}
}

\maketitle
\thispagestyle{plain} %frait page numer with llncs 

%\setcounter{footnote}{0} %footnote zurücksetzen wegen footnote bei den Autoren

%%%%%%%%%%%%%%%%%%%%%%%%%%
\begin{abstract}
In a graph, a perfect matching cut is an edge cut that is a perfect matching. \textsc{perfect matching cut} (\PMC) is the problem of deciding whether  a given graph has a perfect matching cut, and is known to be $\NP$-complete. We revisit the problem and show that \PMC\ remains $\NP$-complete when restricted to bipartite graphs of maximum degree~3 and arbitrarily large girth. 
Complementing this hardness result, we give two graph classes in which \PMC\ is polynomial time solvable. The first one includes claw-free graphs and graphs without an induced path on five vertices, the second one properly contains all chordal graphs. 
Assuming the Exponential Time Hypothesis, we show there is no $O^*(2^{o(n)})$-time algorithm for \PMC\ even when restricted to $n$-vertex bipartite graphs, and also show that \PMC\ can be solved in $O^*(1.2721^n)$ time by means of an exact branching algorithm. 

\keywords{Matching cut \and Perfect matching cut \and Computational complexity \and Exact branching algorithm \and Graph algorithm.}
\end{abstract}

\section{Introduction}
%=====================
In a graph $G=(V,E)$, a \emph{cut} is a partition $V=X\cup Y$ of the vertex set into disjoint, non-empty sets $X$ and $Y$. The set of all edges in $G$ having an endvertex in~$X$ and the other endvertex in~$Y$, written $E(X,Y)$, is called the \emph{edge cut} of the cut $(X,Y)$. A \emph{matching cut} is an edge cut that is a (possibly empty) matching. 
Another way to define matching cuts is as follows; see~\cite{Chvatal84,Graham70}: a cut $(X,Y)$ is a matching cut if and only if each vertex in~$X$ has at most one neighbor in~$Y$ and each vertex in~$Y$ has at most one neighbor in~$X$. 
\textsc{matching cut} (\MC) is the problem of deciding if a given graph admits a matching cut and this problem has received much attention lately; see~\cite{ChenHLLP21,GolovachKKL21} for recent results. 

An interesting special case, where the edge cut $E(X,Y)$ is a \emph{perfect matching}, 
%that is, each vertex in $X$ has exactly one neighbor in $Y$ and each vertex in $Y$ has exactly one neighbor in $X$, 
was considered in~\cite{HeggernesT98}. The authors proved that \textsc{perfect matching cut} (\PMC), the problem of deciding if a given graph admits an edge cut that is a perfect matching, is $\NP$-complete. 
A perfect matching cut $(X,Y)$ can be described as a $(\sigma, \rho)$ 2-partitioning problem~\cite{Prosk}, as every vertex in~$X$ 
must have exactly one neighbor in~$Y$ and every vertex in~$Y$ 
must have exactly one neighbor in~$X$. By results of~\cite{Bui,Prosk,Vatshelle} 
it can therefore be solved in FPT time when
parameterized by treewidth or cliquewidth (to mention only the two most
famous width parameters) and in XP time when parameterized by mim-width
(maximum induced matching-width) of a given decomposition of the graph.
For several classes of graphs, like interval and permutation, a
decomposition of bounded mim-width can be computed in polynomial-time~\cite{Belmonte}, 
thus the problem is polynomial on such classes. 

In this paper, we revisit the \PMC\ problem. Our results are:
\begin{itemize}
\item While \MC\ is polynomial time solvable when restricted to graphs of maximum degree~$3$ and its computational complexity is still open for graphs with large girth, we prove that \PMC\ is $\NP$-complete in the class of bipartite graphs of maximum degree~$3$ and arbitrarily large girth. 
Further, we show that \PMC\ cannot be solved in $O^*(2^{o(n)})$ time for $n$-vertex bipartite graphs and cannot be solved in $O^*(2^{o(\sqrt{n})})$ time for bipartite graphs with maximum degree~3 and arbitrarily girth.
\item We provide the first exact algorithm to solve \PMC\ on $n$-vertex graphs, of runtime $O^*(1.2721^n)$. Note that the fastest algorithm for \MC\ has runtime $O^*(1.3280^n)$ and is based on the current-fastest algorithm for \textsc{3-sat}~\cite{KomusiewiczKL20}. 
\item We give two graph classes of unbounded mim-width in which \PMC\ is solvable in polynomial time. The first class contains all claw-free graphs and graphs without an induced path on~5 vertices, the second class contains all chordal graphs. 
\end{itemize}

\emph{Related work.} 
The computational complexity of \MC\ was first considered by Chv\'atal 
in~\cite{Chvatal84}, who proved that \MC\ is $\NP$-complete for graphs with maximum degree~$4$ and polynomial time solvable for graphs with maximum degree at most~$3$. Hardness results were obtained for further restricted graph classes such as bipartite graphs, planar graphs and graphs of bounded diameter (see~\cite{Bonsma09,LeL19,Moshi89}). 
Further graph classes in which \MC\ is polynomial time solvable were identified, such as graphs of bounded tree-width, claw-free, hole-free and Ore-graphs (see~\cite{Bonsma09,ChenHLLP21,Moshi89}). 
FPT algorithms and kernelization for \MC\ with respect to various parameters has been discussed in~\cite{AravindKK17,AravindS21,GolovachKKL21,GomesS19,KomusiewiczKL20,KratschL16}. 
The current-best exact algorithm solving \MC\ has a running time of~$O^*(1.3280^n)$ where~$n$ is the vertex number of the input graph~\cite{KomusiewiczKL20}. Faster exact algorithms can be obtained for the case when the minimum degree is large~\cite{ChenHLLP21}. 
The recent paper~\cite{GolovachKKL21} addresses enumeration aspects of matching cuts.   
%\textsc{Matching Cut} has \classFPT{}-algorithms for the maximum cut size~$k$~\cite{GomesS19}, the vertex cover number of~$G$~\cite{KratschL16}, and weaker parameters such as the twin-cover number~\cite{AravindKK17} or the cluster vertex deletion number~\cite{KomusiewiczKL20}. 

Very recently, a related notion has been discussed in~\cite{BouquetP20}. In this paper, the authors consider perfect matchings $M\subseteq E$ of a graph $G=(V,E)$ such that $G\setminus M=(V,E\setminus M)$ is disconnected, which they call perfect matching-cuts. To avoid confusion, we call such a perfect matching a \emph{disconnected perfect matching}. Note that, by definition, every perfect matching cut is a disconnected perfect matching but a disconnected perfect matching need not be a perfect matching cut. 
Indeed, all perfect matchings of the cycle on $4k+2$ vertices are disconnected perfect matchings and none of them is a perfect matching cut. 
In~\cite{BouquetP20}, the authors showed, among others, that recognizing graphs having a disconnected perfect matching is $\NP$-complete even when restricted to graphs with maximum degree~$4$, and left open the case of maximum degree~$3$. 
It is not clear whether our hardness result on degree-3 graphs can be modified to obtain a hardness result of recognizing degree-3 graphs having a disconnected perfect matching.    

\emph{Notation and terminology.} 
Let $G=(V,E)$ be a graph with vertex set $V(G)=V$ and edge set $E(G)=E$. 
%We use~$n:=|V|$ and~$m:=|E|$ to denote the number of vertices and edges in the graph under consideration. 
The neighborhood of a vertex $v$ in $G$, denoted by $N_G(v)$, is the set of all vertices in $G$ adjacent to $v$; if the context is clear, we simply write $N(v)$. 
Let $\deg(v) := |N(v)|$ be the degree of the vertex $v$, and $N[v]:=N(v)\cup\{v\}$ be the closed neighborhood of $v$. 
For a subset~$F\subseteq V$, $G[F]$ is the subgraph of $G$ induced by $F$, and $G-F$ stands for $G[V\setminus F]$. We write $N_F(v)$ and $N_F[v]$ for $N(v) \cap F$ and $N[v]\cap F$, respectively, and call the vertices in $N(v)\cap F$ the \emph{$F$-neighbors} of $v$. 
The \emph{girth} of $G$ is the length of a shortest cycle in $G$, assuming~$G$ contains a cycle. 
The path on $n$ vertices is denoted by $P_n$, the complete bipartite graph with one color class of size $p$ and the other of size $q$ is denoted by $K_{p,q}$; $K_{1,3}$ is also called a \emph{claw}.

When an algorithm branches on the current instance of size $n$ into $r$~subproblems of sizes at most $n-t_1, n-t_2,\ldots, n-t_r$, then $(t_1,t_2,\ldots,t_r)$ is called the \emph{branching vector} of this branching, and the unique positive root of $x^n-x^{n-t_1} - x^{n-t_2} - \cdots - x^{n-t_r} = 0$, denoted by $\tau(t_1,t_2,\ldots,t_r)$, is called its \emph{branching factor}. 
The running time of a branching algorithm is $O^*(\alpha^n)$, where $\alpha=\max_i \alpha_i$ and $\alpha_i$ is the branching factor of branching rule~$i$, and the maximum is taken over all branching rules. Throughout the paper we use the $O^*$ notation which suppresses polynomial factors. 
We refer to~\cite{FominK10} for more details on exact branching algorithms. 

Algorithmic lower bounds in this paper are conditional, based on the Exponential Time Hypothesis (ETH)~\cite{ImpagliazzoP01}. The ETH states that there is no $O^*(2^{o(n)})$-time algorithm for \3SAT\ where $n$ is the variable number of the input \textsc{3-cnf} formula. 
It is known that the hard case for \3SAT\ already consists of formulas with $O(n)$ clauses~\cite{ImpagliazzoPZ01}. 
Thus, assuming ETH, there is no $O^*(2^{o(m)})$-time algorithm for \3SAT\ where $m$ is the clause number of the input formula.  

Observe that a graph has a perfect matching cut if and only if each of its connected components has a perfect matching cut. Thus, we may assume that all graphs in this paper are connected.

\section{Hardness results}\label{sec:hard}
%=========================================
In this section, we give two polynomial time reductions from \NAE3SAT\ to \PMC. Recall that an instance for \NAE3SAT\ is a \textsc{3-cnf} formula $F=C_1\land C_2\land\cdots\land C_m$ over $n$ variables $x_1, x_2, \ldots, x_n$, in which each clause $C_j$ consists of three distinct variables. The problem asks whether there is a truth assignment of the variables such that every clause in $F$ has one true and one false variable. Such an assignment is called \emph{nae assignment}. 

It is well-known that there is a polynomial reduction from \3SAT\ to \NAE3SAT\ where the variable number of the reduced formula is linear in the clause
number of the original formula. Hence, the ETH implies that there is no subexponential time algorithm for \NAE3SAT\ in the number of variables. 
%
%The first hardness result is: 
\begin{theorem}\label{thm:ETH}
Assuming ETH, \PMC\ cannot be solved in subexponential time in the vertex number, even when restricted to bipartite graphs. 
\end{theorem}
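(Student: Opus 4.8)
\emph{Overall strategy.} I would reduce from \NAE3SAT. As noted just above, the standard reduction from a sparsified \textsc{3-cnf} formula shows that, assuming ETH, \NAE3SAT\ has no $O^*(2^{o(n)})$-time algorithm, where $n$ is the number of variables; and since that reduction introduces only a constant number of \textsc{nae}-clauses per original clause, we may moreover assume the instance $F$ has $O(n)$ clauses. The goal is then a polynomial-time construction of a \emph{bipartite} graph $G_F$ with $|V(G_F)| = O(n)$ such that $G_F$ has a perfect matching cut iff $F$ has an \textsc{nae}-assignment. Given such a reduction, an $O^*(2^{o(|V(G_F)|)})$-time algorithm for \PMC\ restricted to bipartite graphs would decide \NAE3SAT\ in time $O^*(2^{o(n)})$, contradicting ETH.

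\emph{Gadgets.} The construction rests on three facts about perfect matching cuts. First, a pendant vertex (degree $1$) forces its incident edge into every perfect matching cut, since the pendant needs exactly one neighbor across the cut; this ``anchors'' chosen edges. Second, along an even path or an even cycle whose endpoint is ``saturated elsewhere'', the cut edges are forced, so the side-assignment propagates rigidly in blocks of two; an even cycle therefore has only two perfect matching cuts, which is what lets a variable carry exactly one bit to each of its occurrences. Third, a bipartite gadget containing a degree-$3$ vertex $z$ whose neighbours are three ``signal'' vertices realizes the \textsc{nae} constraint: $z$ has exactly one neighbour across the cut iff its three neighbours are not all on the same side (if they are, $z$ has $0$ or $3$ such neighbours), and since $z$'s side is free this is symmetric, i.e.\ it is exactly ``not all equal''. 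Assembling these: a \emph{variable gadget} $V_i$ (built from even cycles/paths and pendants) that admits exactly two perfect-matching-cut restrictions, read as $x_i=\mathsf{true}/\mathsf{false}$, and that exposes one interface vertex per occurrence of $x_i$ carrying this value consistently; and a \emph{clause gadget} $D_j$ attached to the three interface vertices of the variables of $C_j$, such that a partial perfect matching cut extends over $D_j$ iff the three incident values are not all equal. Using only even cycles/paths and bipartite detector gadgets keeps $G_F$ bipartite, and taking $|V_i|=O(1+\mathrm{occ}(x_i))$ and $|D_j|=O(1)$ keeps $|V(G_F)|=O(n)$, because $\sum_i \mathrm{occ}(x_i)=3\cdot(\#\text{clauses})=O(n)$.

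\emph{Correctness.} For the forward direction, from an \textsc{nae}-assignment put every $V_i$ into the corresponding state, colour its interface vertices by the value of $x_i$, and extend across each $D_j$ — possible since no clause is monochromatic — then check that the resulting bipartition is indeed a perfect matching cut (every vertex has exactly one neighbour across). For the backward direction, from a perfect matching cut of $G_F$ the rigidity of $V_i$ forces it into one of its two states, giving a truth value of $x_i$ that is the same at each occurrence; since the partition restricts to a valid extension over every $D_j$, no clause $C_j$ is monochromatic, so the extracted assignment is an \textsc{nae}-assignment. Together with the size and time bounds above, this proves the theorem.

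\emph{Main obstacle.} The technical heart is the precise design of $V_i$ and $D_j$ so that three things hold simultaneously: (i) the whole graph stays bipartite; (ii) the cut-edge matching is booked consistently at each interface — i.e.\ the cut edge saturating an interface vertex lies either inside $V_i$ or inside $D_j$, accounted for exactly once, and both global options are realizable exactly when the combinatorics demands it; and (iii) $V_i$ is genuinely rigid (only two states) while $D_j$ genuinely realizes ``not all equal'' and not some accidental stronger or weaker condition. In particular one must avoid an interface whose two endpoints are joined through a gadget that forces incident variables equal, and equally avoid a ``free'' interface that decouples a clause from the actual value of its variable. Once the gadgets are pinned down, both equivalences become a finite case analysis on constant-size graphs.
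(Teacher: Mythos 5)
Your overall strategy matches the paper's: reduce from \NAE3SAT\ (with $O(n)$ clauses), build a bipartite graph of $O(n)$ vertices from variable gadgets and constant-size clause gadgets, and conclude via ETH. However, the proof has a genuine gap: the reduction is never actually given. The entire content of this theorem is the concrete construction of the gadgets and the verification of their properties, and you explicitly defer exactly that (``the technical heart is the precise design of $V_i$ and $D_j$''). In particular, your sketched clause gadget --- a free degree-$3$ vertex $z$ whose cut status detects ``not all equal'' --- creates precisely the accounting problem you name as the main obstacle: when the three signals are split $2$--$1$, the minority interface vertex must take $z$ as its unique cross-neighbour, so it must \emph{not} be saturated inside its variable gadget, while the majority interface vertices must be saturated inside theirs; since whether a given occurrence is the minority depends on the other variables of the clause, the variable gadget needs per-occurrence flexibility that you have not designed, and your claim that $V_i$ is ``genuinely rigid (only two states)'' is in tension with that flexibility. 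Without resolving this, neither direction of the equivalence can be checked, so the argument is a plan rather than a proof. (A smaller inaccuracy: an even cycle need not have two perfect matching cuts; $C_{4k+2}$ has none, only cycles of length divisible by $4$ do.)

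The paper sidesteps the interface problem entirely rather than solving it head-on: each variable is a single vertex with a pendant, so the pendant edge is forced into the cut and consequently \emph{no} cut edge ever crosses between a variable vertex and a clause gadget; the variable's value is transmitted by forcing the adjacent clause vertices onto the \emph{same side} as the variable vertex, not by matching across the interface. The NAE behaviour is then delivered by a fixed $8$-vertex gadget, the cube, whose three labelled vertices are never monochromatic in any of its exactly three perfect matching cuts, and any non-trivial bipartition of them extends uniquely. If you adopt this ``same-side interface plus pendant'' scheme, your booking problem disappears and the finite case analysis you anticipate reduces to checking the cube; as written, though, your proposal is missing the construction that the theorem requires.
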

\emph{Proof.}\, 
We give a polynomial reduction from  \NAE3SAT to \PMC\ restricted to bipartite graphs.   
Given a \textsc{3-cnf} formula $F$, construct a graph~$G$ as follows. 

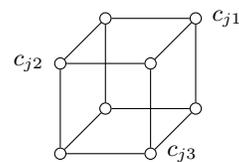
\begin{wrapfigure}{r}{.3\textwidth}
\centering
\begin{tikzpicture}[scale=.3]
\tikzstyle{vertex}=[draw,circle,inner sep=1.5pt]; 
\node[vertex] (cj) at (5,5) {}; %[label=right:\small $c_j$] {};
\node[vertex] (cj1) at (7,7) [label=right:\small $c_{j1}$] {};
\node[vertex] (cj2) at (1,5) [label=left:\small $c_{j2}$] {};
\node[vertex] (cj3) at (5,1) [label=right:\small $c_{j3}$] {};
\node[vertex] (cj') at (3,3) {};
\node[vertex] (cj1') at (1,1) {};
\node[vertex] (cj2') at (7,3) {};
\node[vertex] (cj3') at (3,7) {};

\draw (cj)--(cj1); \draw (cj)--(cj2); \draw (cj)--(cj3);
\draw (cj')--(cj1'); \draw (cj')--(cj2'); \draw (cj')--(cj3');
\draw (cj1)--(cj2'); \draw (cj1)--(cj3');
\draw (cj2)--(cj1'); \draw (cj2)--(cj3');
\draw (cj3)--(cj1'); \draw (cj3)--(cj2');
\end{tikzpicture}
\caption{The graph $G(C_j)$.}\label{fig:cube}
\end{wrapfigure}
For each clause $C_j=\{c_{j1},c_{j2},c_{j3}\}$, let $G(C_j)$ be the cube with \emph{clause vertices} labeled $c_{j1}$, $c_{j2}$, $c_{j3}$, respectively, as depicted in Fig.~\ref{fig:cube}. %on the right hand-side
For each variable $x_i$, we introduce a \emph{variable vertex} $x_i$ and a dummy vertex $x_i'$ adjacent only to $x_i$. 
Finally, we connect a variable vertex $x_i$ to a clause vertex in $G(C_j)$ if and only if~$C_j$ contains the variable $x_i$, i.e., $x_i=c_{jk}$ for some $k\in\{1,2,3\}$. 

Observe that $G$ is bipartite and has the following property: no perfect matching $M$ of $G$ (in particular, no perfect matching cut) contains an edge between a clause vertex and a variable vertex. Thus, for every perfect matching cut $M=E(X,Y)$ of~$G$, the restriction $M_j=E(X_j,Y_j)$ on $G(C_j)$ is a perfect matching cut of $G(C_j)$. 
Moreover, $G(C_j)$ has the following property: it has exactly three perfect matching cuts, and in any perfect matching cut of $G(C_j)$ not all clause vertices belong to the same part. Conversely, any bipartition of $C_j$ can be extended (in a unique way) to a perfect matching cut $M_j$ of~$G(C_j)$. See also Fig.~\ref{fig:3cube}. 
\begin{figure}[!ht]
\tikzstyle{vertexX}=[circle,inner sep=1.5pt,fill=black]
\tikzstyle{vertexY}=[draw,circle,inner sep=1.5pt,fill=lightgray]
\tikzstyle{square}=[draw, rectangle,inner sep=1.8pt]
\tikzstyle{vertex}=[draw,circle,inner sep=1.5pt] 
\tikzstyle{subd}=[draw,circle,inner sep=1.5pt]
\begin{center}
\begin{tikzpicture}[scale=.3]
\node[vertexX] (cj) at (5,5) {}; %[label=right:\small $c_j$] {};
\node[vertexY] (cj1) at (7,7) [label=right:\small $c_{j1}$] {};
\node[vertexX] (cj2) at (1,5) [label=left:\small $c_{j2}$] {};
\node[vertexX] (cj3) at (5,1) [label=right:\small $c_{j3}$] {};
\node[vertexY] (cj') at (3,3) {};
\node[vertexX] (cj1') at (1,1) {};
\node[vertexY] (cj2') at (7,3) {};
\node[vertexY] (cj3') at (3,7) {};

\draw[very thick] (cj)--(cj1); \draw (cj)--(cj2); \draw (cj)--(cj3);
\draw[very thick] (cj')--(cj1'); \draw (cj')--(cj2'); \draw (cj')--(cj3');
\draw (cj1)--(cj2'); \draw (cj1)--(cj3');
\draw (cj2)--(cj1'); \draw[very thick] (cj2)--(cj3');
\draw (cj3)--(cj1'); \draw[very thick] (cj3)--(cj2');
\end{tikzpicture}
\qquad
\begin{tikzpicture}[scale=.3] 
\node[vertexX] (cj) at (5,5) {}; %[label=right:\small $c_j$] {};
\node[vertexX] (cj1) at (7,7) [label=right:\small $c_{j1}$] {};
\node[vertexY] (cj2) at (1,5) [label=left:\small $c_{j2}$] {};
\node[vertexX] (cj3) at (5,1) [label=right:\small $c_{j3}$] {};
\node[vertexY] (cj') at (3,3) {};
\node[vertexY] (cj1') at (1,1) {};
\node[vertexX] (cj2') at (7,3) {};
\node[vertexY] (cj3') at (3,7) {};

\draw (cj)--(cj1); \draw[very thick] (cj)--(cj2); \draw (cj)--(cj3);
\draw (cj')--(cj1'); \draw[very thick] (cj')--(cj2'); \draw (cj')--(cj3');
\draw (cj1)--(cj2'); \draw[very thick] (cj1)--(cj3');
\draw (cj2)--(cj1'); \draw (cj2)--(cj3');
\draw[very thick] (cj3)--(cj1'); \draw (cj3)--(cj2');
\end{tikzpicture}
\qquad
\begin{tikzpicture}[scale=.3] 
\node[vertexX] (cj) at (5,5) {}; %[label=right:\small $c_j$] {};
\node[vertexX] (cj1) at (7,7) [label=right:\small $c_{j1}$] {};
\node[vertexX] (cj2) at (1,5) [label=left:\small $c_{j2}$] {};
\node[vertexY] (cj3) at (5,1) [label=right:\small $c_{j3}$] {};
\node[vertexY] (cj') at (3,3) {};
\node[vertexY] (cj1') at (1,1) {};
\node[vertexY] (cj2') at (7,3) {};
\node[vertexX] (cj3') at (3,7) {};

\draw (cj)--(cj1); \draw (cj)--(cj2); \draw[very thick] (cj)--(cj3);
\draw (cj')--(cj1'); \draw (cj')--(cj2'); \draw[very thick] (cj')--(cj3');
\draw[very thick] (cj1)--(cj2'); \draw (cj1)--(cj3');
\draw[very thick] (cj2)--(cj1'); \draw (cj2)--(cj3');
\draw (cj3)--(cj1'); \draw (cj3)--(cj2');
\end{tikzpicture}
\caption{The three perfect matching cuts of $G(C_j)$; black vertices in $X$, gray vertices in~$Y$.}\label{fig:3cube}
\end{center}
\end{figure}
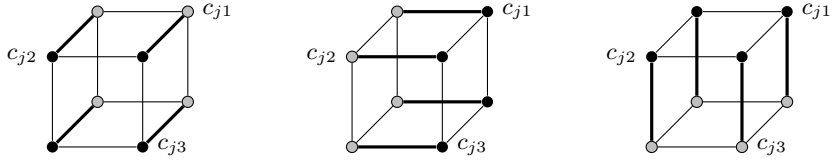

We are now ready to see that $F$ has a nae assignment if and only if $G$ has a perfect matching cut:  
First, if there is a nae assignment for $F$ then put all true variable vertices into $X$, all false variable vertices into $Y$, and extend $X$ and $Y$ (in a unique way) to a perfect matching cut of $G$; note that $x_i'$ and $x_i$ have to belong to different parts. Second, if $(X,Y)$ is a perfect matching cut of $G$ then defining $x_i$ be $\mathtt{true}$ if $x_i\in X$ and $\mathtt{false}$ if $x_i\in Y$ we obtain a nae assignment for~$F$. 

Observe that $G$ has $N=O(n+m)$ vertices. Hence the reduction implies that, assuming ETH, \PMC\ has no subexponential time algorithm in vertex number~$N$, even when restricted to bipartite graphs.\qed
\smallskip

We now describe how to avoid vertices of degree~4 and larger (the clause and variable vertices) in the previous reduction to obtain a bipartite graph with maximum degree~3 and large girth. 

\begin{theorem}\label{thm:bip+deg3+girth}
Let $g>0$ be a given integer. 
\PMC\ remains $\NP$-complete when restricted to bipartite graphs of maximum degree three and girth at least~$g$.
\end{theorem}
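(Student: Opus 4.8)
\emph{Proof idea.}
The plan is to patch the reduction of Theorem~\ref{thm:ETH}. In the graph $G$ built there, the only features preventing it from witnessing the claim are that each clause vertex of a cube $G(C_j)$ has degree four (its three cube-edges plus one edge to a variable vertex), that each variable vertex $x_i$ has degree $d_i+1$, where $d_i$ is the number of clauses containing $x_i$, and that the girth is~$4$. I would remove these defects with a bounded-degree bipartite gadget replacing each kind of high-degree vertex, followed by a girth-amplification step by edge subdivision, keeping the graph bipartite throughout.

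The workhorse is an elementary observation on subdivided paths: if $P=v_0v_1\cdots v_\ell$ is a path all of whose internal vertices have degree two in the host graph, then in every perfect matching cut exactly one of the two edges at each $v_i$ ($1\le i\le\ell-1$) is in the cut, so $v_{i-1}$ and $v_{i+1}$ always lie on opposite sides; hence the colours along the even, resp.\ odd, positions of $P$ alternate. Consequently: replacing an edge by a path of length $\equiv 1\pmod 4$ is \emph{faithful} --- it preserves bipartiteness and all degrees, induces a bijection of the perfect matching cuts (the internal vertices become forced, and the path ``is in the cut'' precisely when its ends differ), and multiplies every cycle length by the path length, so a length $4k+1>g$ makes the girth exceed~$g$; a path of length $\equiv 0\pmod 4$ is an \emph{equal-connector}, forcing its two endpoints onto the same side; and a path of length $\equiv 2\pmod 4$ is an \emph{opposite-connector}. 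In particular I can transmit a Boolean value along an arbitrarily long, cycle-free path while controlling whether it flips.

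\emph{Variable gadget.} For $x_i$ occurring in clauses $C_{j_1},\dots,C_{j_{d_i}}$ I would delete $x_i$ together with its dummy and insert the bipartite, maximum-degree-three, acyclic gadget $\Gamma_i$: vertices $z_1,\dots,z_{d_i}$, an equal-connector between $z_k$ and $z_{k+1}$ for $1\le k<d_i$, and a pendant path $z_k\!-\!\ell_k\!-\!\alpha_k$ at each $z_k$; the $\alpha_k$ are its terminals, and $\alpha_k$ will be joined to the gadget of $C_{j_k}$. By the observation all $z_k$ receive one common colour $c$ in any perfect matching cut, and --- since $\ell_k$ has degree two --- $\alpha_k$ is then forced to colour $\bar c$ no matter which of its edges $z_k$ is matched along; conversely, for either choice of $c$, orienting the connectors consistently extends to an actual perfect matching cut of $\Gamma_i$. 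Thus $\Gamma_i$ has exactly two states and faithfully plays the role of $x_i$ with its dummy, while, being acyclic, it creates no short cycle.

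\emph{Clause gadget and assembly.} The crux, and the step I expect to be the main obstacle, is to replace each cube $G(C_j)$ by a bipartite, maximum-degree-three gadget $D_j$ with three degree-two terminals $b_1,b_2,b_3$ that inherits the cube's role: $D_j$ carries perfect matchings, every vertex of $D_j$ is matched inside $D_j$ in every perfect matching cut of the whole graph (the three edges leaving $b_1,b_2,b_3$ towards the variable gadgets must therefore stay out of the matching, which is what lets a perfect matching cut restrict to one of $D_j$), and the colourings of $(b_1,b_2,b_3)$ attainable by perfect matching cuts are exactly the six ``not all equal'' patterns. I would try to obtain $D_j$ by locally expanding each degree-three clause vertex of the cube into a short path, compensating so that the ``exactly one incident cut edge'' property is retained, or else by exhibiting a suitable cubic bipartite graph directly; one must keep the whole construction bipartite and choose, for each terminal-to-terminal wiring, an equal- or an opposite-connector so that $b_k$ ends up coloured with $x_i$'s value and that connecting edge stays out of the matching --- arranging this interface together with the gadget itself is the delicate part. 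Granting such $D_j$, the final graph $G_g$ is obtained from $G$ by swapping every cube for its $D_j$ and every variable-plus-dummy for its $\Gamma_i$, wiring corresponding terminals through connectors, and subdividing every remaining edge by a path of length $4k+1>g$. Then $G_g$ is bipartite, of maximum degree three, of girth greater than~$g$, and computable in polynomial time; and, reading the common colour of the $z_k$ in $\Gamma_i$ as the truth value of $x_i$, the gadget properties give that $G_g$ has a perfect matching cut if and only if $F$ has a nae assignment. Since \NAE3SAT\ is $\NP$-complete, so is \PMC\ on this class.\qed
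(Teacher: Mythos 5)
Your overall plan (reuse the Theorem~\ref{thm:ETH} reduction, kill the high-degree vertices by gadgets, then amplify girth by $\equiv 1\pmod 4$ subdivisions) is the same strategy as the paper, but as written the proof has a genuine hole at its centre: the clause gadget $D_j$ is never constructed, only postulated (``Granting such $D_j$\dots''). This is not a routine detail; it is exactly the crux of the reduction. The paper resolves it with a very simple device you could have used: subdivide \emph{every} edge of the cube by $4h+4$ vertices (so each edge becomes a path of length $\equiv 1\pmod 4$, which, as you observe, is faithful), and then take as the three clause terminals not original cube vertices but the three subdivision vertices adjacent to one fixed cube vertex; these have degree two inside the gadget, so attaching one edge each keeps maximum degree three, and the gadget provably inherits the cube's properties (exactly three perfect matching cuts, never all three terminals on one side, every not-all-equal pattern extendible).

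The second gap is that your connector calculus records only the colour constraints and ignores the matching bookkeeping, and this actually breaks the variable gadget and the assembly. For a path of even length $\equiv 0$ or $\equiv 2\pmod 4$ whose internal vertices have degree two, the same alternation argument you use shows that in \emph{every} perfect matching cut exactly one of the two endpoints is matched along an edge of that path; it is never the case that ``the connecting edge stays out of the matching'' at both ends, as your interface requires. In $\Gamma_i$ this forces a contradiction: the $d_i-1$ equal-connectors supply exactly $d_i-1$ matches to the $d_i$ vertices $z_k$, so exactly one $z_{k_0}$ must be matched to its pendant $\ell_{k_0}$, whence exactly one terminal $\alpha_{k_0}$ must be matched outward along its connector while the other $d_i-1$ terminals must not be. This is incompatible with your stipulation that all clause-gadget terminals are matched inside $D_j$ together with any uniform choice of wiring (even-length connectors would force every $\alpha_k$ to be matched outward, odd-length ones would force none), so for any variable occurring in at least two clauses the assembled graph has no perfect matching cut at all, regardless of the formula. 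The paper avoids this parity trap by making the variable gadget a cycle of length $4m(h+1)$ with the $m$ variable vertices spaced at distance divisible by four: every vertex of the cycle is matched inside the cycle, the variable--clause links are single edges that can never be matching edges, and both endpoints of such an edge simply inherit the same colour, which is how the truth value is transmitted. Your construction would need to be repaired along these lines (e.g.\ closing the variable gadget into a cycle so that all terminals are matched internally) before the claimed equivalence with nae-satisfiability can be argued.
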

\begin{proof} We modify the gadgets used in the proof of Theorem~\ref{thm:ETH}. 
Let $h\ge 0$ be a fixed integer, which will be more concrete later. 

\emph{Clause gadget:} we subdivide every edge of the cube with $4h+4$ new vertices, fix a vertex~$c_j$ of degree~$3$ and label the three neighbors of $c_j$ with $c_{j1}$, $c_{j2}$ and $c_{j3}$, respectively. We denote the obtained graph again by $G(C_j)$ and call the labeled vertices the \emph{clause vertices}. The case $h=0$ is shown in Fig.~\ref{fig:gadgets}. Observe, $G(C_j)$ has the same properties of the cube used in the previous reduction: it has exactly three perfect matching cuts, and in any perfect matching cut of $G(C_j)$ not all clause vertices belong to the same part. Moreover, any bipartition of $C_j$ can be extended (in a unique way) to a perfect matching cut $M_j$ of $G(C_j)$. See also Fig.~\ref{fig:extend}.

\emph{Variable gadget:} for each variable $x_i$ we introduce $m$ \emph{variable vertices} $x_i^j$ one for each clause~$C_j$, $1\le j\le m$, as follows. (We assume that the formula $F$ consists of $m\ge 3$ clauses.) First, take a cycle with $m$ vertices $x_i^1$, $x_i^2$, \ldots, $x_i^m$ and edges $x_i^1x_i^2$, $x_i^2x_i^3$, \ldots, $x_i^{m-1}x_i^m$ and $x_i^1 x_i^m$. 
Then subdivide every edge with $4h+3$ new vertices to obtain the graph $G(x_i)$. Thus, $G(x_i)$ is a cycle on $4m(h+1)$ vertices. The case $m=3, h=0$ is shown in Fig.~\ref{fig:gadgets}. The following property of $G(x_i)$ can be verified immediately: in any perfect matching cut of $G(x_i)$, all variable vertices $x_i^j$, $1\le j\le m$, belong to the same part.  
\begin{figure}[!ht]
\begin{center}
\begin{tikzpicture}[scale=.3]
\tikzstyle{vertexX}=[circle,inner sep=1.5pt,fill=black];
\tikzstyle{vertexY}=[draw,circle,inner sep=1.5pt];
\tikzstyle{square}=[draw, rectangle,inner sep=2pt];
\tikzstyle{vertex}=[draw,circle,inner sep=2pt]; 
\tikzstyle{subd}=[draw,circle,inner sep=1.5pt];

\node[square] (q0) at (11,11) {};
\node[square] (q1) at (16,16) {};
\node[square] (q2) at (1,11) {};
\node[square] (q3) at (11,1) {};
\node[square] (q4) at (6,6) {};
\node[square] (q5) at (16,6) {};
\node[square] (q6) at (6,16) {};
\node[square] (q7) at (1,1) {};

\node[vertex] (cj1) at (12,12) [label=below:\quad $c_{j1}$] {};
\node[vertex] (cj2) at (9,11) [label=below:$c_{j2}$] {};
\node[vertex] (cj3) at (11,9) [label=right:$c_{j3}$] {};

\node[subd] (s01-2) at (13,13) {};
\node[subd] (s01-3) at (14,14) {};
\node[subd] (s01-4) at (15,15) {};
\node[subd] (s02-2) at (7,11) {};
\node[subd] (s02-3) at (5,11) {};
\node[subd] (s02-4) at (3,11) {};
\node[subd] (s03-2) at (11,7) {};
\node[subd] (s03-3) at (11,5) {};
\node[subd] (s03-4) at (11,3) {};

\node[subd] (s15-1) at (16,14) {};
\node[subd] (s15-2) at (16,12) {};
\node[subd] (s15-3) at (16,10) {};
\node[subd] (s15-4) at (16,8) {};
\node[subd] (s16-1) at (14,16) {};
\node[subd] (s16-2) at (12,16) {};
\node[subd] (s16-3) at (10,16) {};
\node[subd] (s16-4) at (8,16) {};

\node[subd] (s26-1) at (2,12) {};
\node[subd] (s26-2) at (3,13) {};
\node[subd] (s26-3) at (4,14) {};
\node[subd] (s26-4) at (5,15) {};
\node[subd] (s27-1) at (1,9) {};
\node[subd] (s27-2) at (1,7) {};
\node[subd] (s27-3) at (1,5) {};
\node[subd] (s27-4) at (1,3) {};

\node[subd] (s35-1) at (12,2) {};
\node[subd] (s35-2) at (13,3) {};
\node[subd] (s35-3) at (14,4) {};
\node[subd] (s35-4) at (15,5) {};
\node[subd] (s37-1) at (9,1) {};
\node[subd] (s37-2) at (7,1) {};
\node[subd] (s37-3) at (5,1) {};
\node[subd] (s37-4) at (3,1) {};

\node[subd] (s45-1) at (8,6) {};
\node[subd] (s45-2) at (10,6) {};
\node[subd] (s45-3) at (12,6) {};
\node[subd] (s45-4) at (14,6) {};
\node[subd] (s46-1) at (6,8) {};
\node[subd] (s46-2) at (6,10) {};
\node[subd] (s46-3) at (6,12) {};
\node[subd] (s46-4) at (6,14) {};
\node[subd] (s47-1) at (5,5) {};
\node[subd] (s47-2) at (4,4) {};
\node[subd] (s47-3) at (3,3) {};
\node[subd] (s47-4) at (2,2) {};

\draw (q0)--(cj1)--(s01-2)--(s01-3)--(s01-4)--(q1);
\draw (q0)--(cj2)--(s02-2)--(s02-3)--(s02-4)--(q2);
\draw (q0)--(cj3)--(s03-2)--(s03-3)--(s03-4)--(q3);

\draw (q1)--(s15-1)--(s15-2)--(s15-3)--(s15-4)--(q5);
\draw (q1)--(s16-1)--(s16-2)--(s16-3)--(s16-4)--(q6);

\draw (q2)--(s26-1)--(s26-2)--(s26-3)--(s26-4)--(q6);
\draw (q2)--(s27-1)--(s27-2)--(s27-3)--(s27-4)--(q7);

\draw (q3)--(s35-1)--(s35-2)--(s35-3)--(s35-4)--(q5);
\draw (q3)--(s37-1)--(s37-2)--(s37-3)--(s37-4)--(q7); 

\draw (q4)--(s45-1)--(s45-2)--(s45-3)--(s45-4)--(q5);
\draw (q4)--(s46-1)--(s46-2)--(s46-3)--(s46-4)--(q6);
\draw (q4)--(s47-1)--(s47-2)--(s47-3)--(s47-4)--(q7);
\end{tikzpicture}
\qquad\qquad
\begin{tikzpicture}[scale=.3] %(7+6*cos(t), 7*6*sin(t)), also Zentrum (7,7), Radius 6
\tikzstyle{vertexX}=[circle,inner sep=1.5pt,fill=black];
\tikzstyle{vertexY}=[draw,circle,inner sep=1.5pt];
\tikzstyle{square}=[draw, rectangle,inner sep=1.8pt];
\tikzstyle{vertex}=[draw,circle,inner sep=2pt]; 
\tikzstyle{subd}=[draw,circle,inner sep=1.5pt];

\node[vertex] (vi1) at (13,7) [label=right:$v_i^1$] {};
\node[vertex] (vi2) at (4,12.1966) {}; %[label=above:$v_i^2$] {};
\node at (3.8,13.4) {$v_i^2$};
\node[vertex] (vi3) at (4,1.8034) {}; %[label=below:$v_i^3$] {};
\node at (3.8,0.5) {$v_i^3$};
\node[subd] (s1) at (12.1966,10) {};
\node[subd] (s2) at (10,12.1966) {};
\node[subd] (s3) at (7,13) {};
\node[subd] (s4) at (1.8034,10) {};
\node[subd] (s5) at (1,7) {};
\node[subd] (s6) at (1.8034,4) {};
\node[subd] (s7) at (7,1) {};
\node[subd] (s8) at (10,1.8034) {};
\node[subd] (s9) at (12.1966,4) {};

\centerarc[](7,7)(3:27.5:6); \centerarc[](7,7)(32.5:57.5:6); \centerarc[](7,7)(62.5:87.5:6); \centerarc[](7,7)(92.5:117:6);
\centerarc[](7,7)(123:147.5:6); \centerarc[](7,7)(152.5:177.5:6); \centerarc[](7,7)(182.5:207.5:6); \centerarc[](7,7)(212.5:237:6); 
\centerarc[](7,7)(243:267.5:6); \centerarc[](7,7)(272.5:297.5:6); \centerarc[](7,7)(302.5:327.5:6); \centerarc[](7,7)(332.5:357:6);
\end{tikzpicture}   
\caption{The clause gadget $G(C_j)$ (left) and the variable gadget $G(x_i)$ (right) in case $m=3$ and $h=0$.}\label{fig:gadgets}
\end{center}
\end{figure}
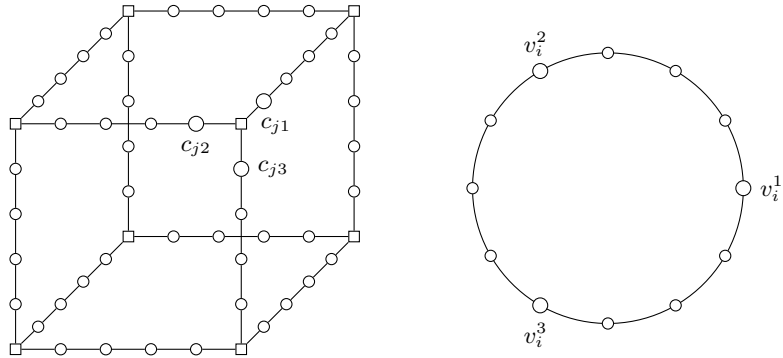

Finally, the graph $G$ is obtained by connecting the variable vertex $x_i^j$ in $G(x_i)$ to a clause vertex in $G(C_j)$ by an edge whenever $x_i$ appears in clause $C_j$, i.e., $x_i=c_{jk}$ for some $k\in\{1,2,3\}$. 
%
%Perhaps an example for $G(F)$ ?
%say $F=\{C_1, C_2, C_3\}$ with $C_1=\{v_1,v_2,v_3\}$, $C_2=\{v_1,v_2,v_4\}$ and $C_3=\{v_2,v_3,v_4\}$. 

It follows from construction, that  
\begin{itemize}
\item $G$ has maximum degree $3$;       
\item $G$ is bipartite. This can be seen as follows. The bipartite subgraph formed by all $G(C_j)$ has a bipartition into independent sets $A$ and $B$ such that all clause  vertices $c_{jk}$ are in~$A$. The bipartite subgraph formed by all $G(x_i)$ has a bipartition into independent sets $C$ and $D$ such that all variable vertices $x_i^j$ are in~$C$. Since the edges in $G$ between these two subgraphs connect clause vertices and variable vertices, therefore the vertex set of $G$ can be partitioned into independent sets $A\cup D$ and $B\cup C$;  
\item $G$ has girth at least $\min\{4m(h+1),8(h+2)\}$. This can be seen as follows. There are~3 types of cycles in $G$. Any of the cycles $G(x_i)$ has length $4m(h+1)$. A shortest cycle in any $G(C_j)$ is a subdivision of a $4$-cycle and has length $4(4h+5)$. The cycles of the last type go through some $G(x_i)$s and some $G(C_j)$s; the length of a shortest one among them is at least $4+(4h+4)+4+(4h+4)=8(h+2)$. 
%; such a cycle, if any, starts at $x_i^j$ in $G(x_i)$ - $G(C_j)$ - $G(x_k^j)$ - $G(x^k)$ - $G(C_p)$ - $G(x_i)$ at $x_i^p$ and goes back to $x_i^j$ along the shorstet path of the cycle $G(x_i)$.
\end{itemize}
Moreover, as in the previous construction, $G$ has the following property: no perfect matching~$M$ of $G$ (in particular, no perfect matching cut) contains an edge between a clause vertex and a variable vertex. Thus, for every perfect matching cut $M=E(X,Y)$ of $G$, the restrictions of $M$ on $G(C_j)$ and on $G(x_i)$ are perfect matching cuts of $G(C_j)$ and of $G(x_i)$, respectively. 
 
Now, as in the proof of Theorem~\ref{thm:ETH}, we can argue that $F$ has a nae assignment if and only if $G$ has a perfect matching cut. 
First, if there is a nae assignment for~$F$ then put all true variable vertices and clause vertices into $X$, all false variable vertices and clause vertices into $Y$, and extend $X$ and $Y$ (in a unique way) to a perfect matching cut of $G$. See Fig.\ref{fig:extend} for an extension in $G(C_j)$. 
Second, if $(X,Y)$ is a perfect matching cut of $G$ then defining~$x_i$ be $\mathtt{true}$ if $x_i\in X$ and $\mathtt{false}$ if $x_i\in Y$ we obtain a nae assignment for~$F$. 

\begin{figure}[!ht]
\begin{center}
\begin{tikzpicture}[scale=.3]
\tikzstyle{vertexX}=[circle,inner sep=2pt,fill=black];
\tikzstyle{vertexY}=[draw,circle,inner sep=2pt,fill=lightgray];
\tikzstyle{square}=[draw, rectangle,inner sep=2pt];
\tikzstyle{vertex}=[draw,circle,inner sep=2pt]; 
\tikzstyle{subd}=[draw,circle,inner sep=1.5pt];

\node[square] (q0) at (11,11) {};
\node[square] (q1) at (16,16) {};
\node[square] (q2) at (1,11) {};
\node[square] (q3) at (11,1) {};
\node[square] (q4) at (6,6) {};
\node[square] (q5) at (16,6) {};
\node[square] (q6) at (6,16) {};
\node[square] (q7) at (1,1) {};

\node[vertexX] (cj1) at (12,12) [label=below:\quad $c_{j1}$] {};
\node[vertexX] (cj2) at (9,11) [label=below:$c_{j2}$] {};
\node[vertexY] (cj3) at (11,9) [label=right:$c_{j3}$] {};

\node[subd] (s01-2) at (13,13) {};
\node[subd] (s01-3) at (14,14) {};
\node[subd] (s01-4) at (15,15) {};
\node[subd] (s02-2) at (7,11) {};
\node[subd] (s02-3) at (5,11) {};
\node[subd] (s02-4) at (3,11) {};
\node[subd] (s03-2) at (11,7) {};
\node[subd] (s03-3) at (11,5) {};
\node[subd] (s03-4) at (11,3) {};

\node[subd] (s15-1) at (16,14) {};
\node[subd] (s15-2) at (16,12) {};
\node[subd] (s15-3) at (16,10) {};
\node[subd] (s15-4) at (16,8) {};
\node[subd] (s16-1) at (14,16) {};
\node[subd] (s16-2) at (12,16) {};
\node[subd] (s16-3) at (10,16) {};
\node[subd] (s16-4) at (8,16) {};

\node[subd] (s26-1) at (2,12) {};
\node[subd] (s26-2) at (3,13) {};
\node[subd] (s26-3) at (4,14) {};
\node[subd] (s26-4) at (5,15) {};
\node[subd] (s27-1) at (1,9) {};
\node[subd] (s27-2) at (1,7) {};
\node[subd] (s27-3) at (1,5) {};
\node[subd] (s27-4) at (1,3) {};

\node[subd] (s35-1) at (12,2) {};
\node[subd] (s35-2) at (13,3) {};
\node[subd] (s35-3) at (14,4) {};
\node[subd] (s35-4) at (15,5) {};
\node[subd] (s37-1) at (9,1) {};
\node[subd] (s37-2) at (7,1) {};
\node[subd] (s37-3) at (5,1) {};
\node[subd] (s37-4) at (3,1) {};

\node[subd] (s45-1) at (8,6) {};
\node[subd] (s45-2) at (10,6) {};
\node[subd] (s45-3) at (12,6) {};
\node[subd] (s45-4) at (14,6) {};
\node[subd] (s46-1) at (6,8) {};
\node[subd] (s46-2) at (6,10) {};
\node[subd] (s46-3) at (6,12) {};
\node[subd] (s46-4) at (6,14) {};
\node[subd] (s47-1) at (5,5) {};
\node[subd] (s47-2) at (4,4) {};
\node[subd] (s47-3) at (3,3) {};
\node[subd] (s47-4) at (2,2) {};

\draw (q0)--(cj1)--(s01-2)--(s01-3)--(s01-4)--(q1);
\draw (q0)--(cj2)--(s02-2)--(s02-3)--(s02-4)--(q2);
\draw (q0)--(cj3)--(s03-2)--(s03-3)--(s03-4)--(q3);

\draw (q1)--(s15-1)--(s15-2)--(s15-3)--(s15-4)--(q5);
\draw (q1)--(s16-1)--(s16-2)--(s16-3)--(s16-4)--(q6);

\draw (q2)--(s26-1)--(s26-2)--(s26-3)--(s26-4)--(q6);
\draw (q2)--(s27-1)--(s27-2)--(s27-3)--(s27-4)--(q7);

\draw (q3)--(s35-1)--(s35-2)--(s35-3)--(s35-4)--(q5);
\draw (q3)--(s37-1)--(s37-2)--(s37-3)--(s37-4)--(q7); 

\draw (q4)--(s45-1)--(s45-2)--(s45-3)--(s45-4)--(q5);
\draw (q4)--(s46-1)--(s46-2)--(s46-3)--(s46-4)--(q6);
\draw (q4)--(s47-1)--(s47-2)--(s47-3)--(s47-4)--(q7);
\end{tikzpicture}
\qquad\qquad
\begin{tikzpicture}[scale=.3]
\tikzstyle{vertexX}=[circle,inner sep=2pt,fill=black];
\tikzstyle{vertexY}=[draw,circle,inner sep=2pt,fill=lightgray];
\tikzstyle{squareX}=[draw, rectangle,inner sep=2pt,fill=black];
\tikzstyle{squareY}=[draw, rectangle,inner sep=2pt,fill=lightgray];
\tikzstyle{subdX}=[draw,circle,inner sep=1.5pt,fill=black];
\tikzstyle{subdY}=[draw,circle,inner sep=1.5pt,fill=lightgray];

\node[squareX] (q0) at (11,11) {};
\node[squareX] (q1) at (16,16) {};
\node[squareX] (q2) at (1,11) {};
\node[squareY] (q3) at (11,1) {};
\node[squareY] (q4) at (6,6) {};
\node[squareY] (q5) at (16,6) {};
\node[squareX] (q6) at (6,16) {};
\node[squareY] (q7) at (1,1) {};

\node[vertexX] (cj1) at (12,12) [label=below:\quad $c_{j1}$] {};
\node[vertexX] (cj2) at (9,11) [label=below:$c_{j2}$] {};
\node[vertexY] (cj3) at (11,9) [label=right:$c_{j3}$] {};

\node[subdY] (s01-2) at (13,13) {};
\node[subdY] (s01-3) at (14,14) {};
\node[subdX] (s01-4) at (15,15) {};
\node[subdY] (s02-2) at (7,11) {};
\node[subdY] (s02-3) at (5,11) {};
\node[subdX] (s02-4) at (3,11) {};
\node[subdY] (s03-2) at (11,7) {};
\node[subdX] (s03-3) at (11,5) {};
\node[subdX] (s03-4) at (11,3) {};

\node[subdY] (s15-1) at (16,14) {};
\node[subdY] (s15-2) at (16,12) {};
\node[subdX] (s15-3) at (16,10) {};
\node[subdX] (s15-4) at (16,8) {};
\node[subdX] (s16-1) at (14,16) {};
\node[subdY] (s16-2) at (12,16) {};
\node[subdY] (s16-3) at (10,16) {};
\node[subdX] (s16-4) at (8,16) {};

\node[subdX] (s26-1) at (2,12) {};
\node[subdY] (s26-2) at (3,13) {};
\node[subdY] (s26-3) at (4,14) {};
\node[subdX] (s26-4) at (5,15) {};
\node[subdY] (s27-1) at (1,9) {};
\node[subdY] (s27-2) at (1,7) {};
\node[subdX] (s27-3) at (1,5) {};
\node[subdX] (s27-4) at (1,3) {};

\node[subdY] (s35-1) at (12,2) {};
\node[subdX] (s35-2) at (13,3) {};
\node[subdX] (s35-3) at (14,4) {};
\node[subdY] (s35-4) at (15,5) {};
\node[subdY] (s37-1) at (9,1) {};
\node[subdX] (s37-2) at (7,1) {};
\node[subdX] (s37-3) at (5,1) {};
\node[subdY] (s37-4) at (3,1) {};

\node[subdY] (s45-1) at (8,6) {};
\node[subdX] (s45-2) at (10,6) {};
\node[subdX] (s45-3) at (12,6) {};
\node[subdY] (s45-4) at (14,6) {};
\node[subdX] (s46-1) at (6,8) {};
\node[subdX] (s46-2) at (6,10) {};
\node[subdY] (s46-3) at (6,12) {};
\node[subdY] (s46-4) at (6,14) {};
\node[subdY] (s47-1) at (5,5) {};
\node[subdX] (s47-2) at (4,4) {};
\node[subdX] (s47-3) at (3,3) {};
\node[subdY] (s47-4) at (2,2) {};

\draw (q0)--(cj1)--(s01-2)--(s01-3)--(s01-4)--(q1);
\draw (q0)--(cj2)--(s02-2)--(s02-3)--(s02-4)--(q2);
\draw (q0)--(cj3)--(s03-2)--(s03-3)--(s03-4)--(q3);

\draw (q1)--(s15-1)--(s15-2)--(s15-3)--(s15-4)--(q5);
\draw (q1)--(s16-1)--(s16-2)--(s16-3)--(s16-4)--(q6);

\draw (q2)--(s26-1)--(s26-2)--(s26-3)--(s26-4)--(q6);
\draw (q2)--(s27-1)--(s27-2)--(s27-3)--(s27-4)--(q7);

\draw (q3)--(s35-1)--(s35-2)--(s35-3)--(s35-4)--(q5);
\draw (q3)--(s37-1)--(s37-2)--(s37-3)--(s37-4)--(q7); 

\draw (q4)--(s45-1)--(s45-2)--(s45-3)--(s45-4)--(q5);
\draw (q4)--(s46-1)--(s46-2)--(s46-3)--(s46-4)--(q6);
\draw (q4)--(s47-1)--(s47-2)--(s47-3)--(s47-4)--(q7);
\end{tikzpicture}
\caption{How to extend $X$ (black) and $Y$ (gray) on the left-hand site to a perfect matching cut in $G(C_j)$ on the right-hand side.}\label{fig:extend}
\end{center}
\end{figure}
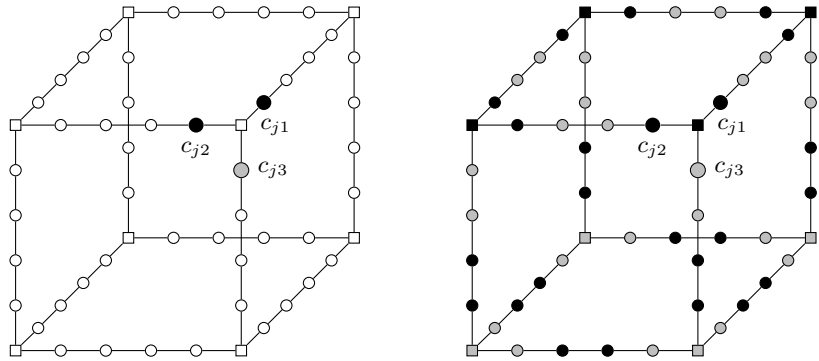

Finally, given $g>0$, let $h\ge 0$ be an integer at least $\max\{\frac{g}{4m}-1,\frac{g}{8}-2\}$. 
Then $G$ has girth at least $\min\{4m(h+1),8(h+2)\}\ge g$. This completes the proof.\qed 
\end{proof}

Note that the graph $G$ in the proof of Theorem~\ref{thm:bip+deg3+girth} has $N=O(m+nm)$ vertices, where $n$ and $m$ are the variable number and clause number, respectively, of the formula $F$. Since we may assume that $F$ has $m=O(n)$ clauses, %when considering ETH
$G$ has $N=O(n^2)$ vertices. Hence we obtain the following.

\begin{theorem}\label{thm:ETH2}
Assuming ETH, there is no $O^*(2^{o(\sqrt{n})})$-time algorithm for \PMC\ even when restricted to $n$-vertex bipartite graphs with maximum degree~$3$ and arbitrary large girth.
\end{theorem}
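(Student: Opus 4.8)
The plan is to carry the parameters of the \NAE3SAT\ instance through the reduction of Theorem~\ref{thm:bip+deg3+girth} and then appeal to ETH, exactly as foreshadowed by the vertex-count remark just stated; the one extra ingredient needed is that the hard \NAE3SAT\ instances have \emph{both} few variables and few clauses. First I would recall that, by the sparsification lemma, ETH rules out an $O^*(2^{o(m_0)})$-time algorithm for \3SAT, where $m_0$ is the clause number, and that the standard reduction from \3SAT\ to \NAE3SAT\ replaces each clause by a constant number of new clauses over a constant number of fresh variables; hence it produces a formula $F$ with $n=\Theta(m_0)$ variables and $m=O(m_0)$ clauses. Combining these, ETH implies that \NAE3SAT\ has no $O^*(2^{o(n)})$-time algorithm even on instances with $m=O(n)$.

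Next I would feed such an instance $F$ to the construction in the proof of Theorem~\ref{thm:bip+deg3+girth}, choosing $h$ to be the least nonnegative integer with $h\ge\max\{\frac{g}{4m}-1,\frac{g}{8}-2\}$. Since $g$ is fixed, $h=O(g)=O(1)$ as $m\to\infty$, so every clause gadget $G(C_j)$ has $8+12(4h+4)=O(1)$ vertices and every variable gadget $G(x_i)$ has $4m(h+1)=O(m)$ vertices. Summing over the $m$ clause gadgets and the $n$ variable gadgets, the graph $G$ has $N=O(m+nm)=O(nm)$ vertices, and since $m=O(n)$ we obtain $N=O(n^2)$. By Theorem~\ref{thm:bip+deg3+girth} itself, $G$ is bipartite, has maximum degree~$3$ and girth at least~$g$, is computable from $F$ in polynomial time, and admits a perfect matching cut if and only if $F$ has a nae assignment.

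Finally I would derive the contradiction: if \PMC\ were solvable in $O^*(2^{o(\sqrt{N})})$ time on $N$-vertex bipartite graphs of maximum degree~$3$ and arbitrarily large girth, then composing this algorithm with the reduction above would decide \NAE3SAT\ on $F$ in time $O^*(2^{o(\sqrt{N})})=O^*(2^{o(\sqrt{n^2})})=O^*(2^{o(n)})$, the polynomial cost of the reduction being absorbed by the $O^*$ notation, which contradicts ETH.

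The routine but essential point deserving care is the accounting in the first step: the quadratic bound $N=O(n^2)$ relies on $m=O(n)$, so one must confirm that the reduction from \3SAT\ to \NAE3SAT\ keeps the clause number linear and not merely polynomial; one must also check that $h$ stays bounded when $g$ is fixed, so that the individual gadgets have size $O(1)$ and $O(m)$, respectively. Neither is hard, but both are needed so that the chains $N=O(nm)=O(n^2)$ and $2^{o(\sqrt{N})}=2^{o(n)}$ go through.
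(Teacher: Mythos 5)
Your proposal is correct and follows essentially the same route as the paper: it plugs a sparsified-ETH-hard \NAE3SAT\ instance with $m=O(n)$ clauses into the construction of Theorem~\ref{thm:bip+deg3+girth}, bounds the resulting graph by $N=O(m+nm)=O(n^2)$ vertices, and concludes that an $O^*(2^{o(\sqrt N)})$ algorithm would yield an $O^*(2^{o(n)})$ algorithm for \NAE3SAT, contradicting ETH. The extra bookkeeping you supply (constant $h$ for fixed $g$, gadget sizes, linearity of the clause count in the \3SAT$\to$\NAE3SAT\ reduction) is exactly what the paper asserts implicitly, so nothing is missing.
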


Observe that \PMC\ is trivial for graphs with maximum degree~2: a (connected) graph with maximum degree~2 has a perfect matching cut if and only if it is a path with even number of vertices or a cycle with $4k$ vertices. Thus, the maximum degree constraint in Theorems~\ref{thm:bip+deg3+girth} and~\ref{thm:ETH2} is optimal.

\section{An exact exponential algorithm}\label{sec:exact}
%========================================================
Recall that, assuming ETH, there is no $O^*(2^{o(n)})$-time algorithm for \PMC\ on $n$-vertex (bipartite) graphs. The main result in this section is an algorithm solving \PMC\ in $O^*(1.2721^n)$ time.

Recall that all graphs considered are connected. 
Our algorithm follows the idea of known branching algorithms for \MC\ \cite{ChenHLLP21,KomusiewiczKL20,KratschL16}. We adapt basic reduction rules for matching cuts to perfect matching cuts, and add new reduction and branching rules for perfect matching cuts.

If the input graph $G=(V,E)$ has a perfect matching cut $(X,Y)$, then some edge has an endvertex $a$ in $X$ and the other endvertex $b$ in $Y$. The  branching algorithm will be executed for all possible edges $ab\in E$, hence $O(m)$ times. 
To do this set $A:=\{a\}$, $B:=\{b\}$, and $F:=V\setminus \{a,b\}$ and call the branching algorithm. At each stage of the algorithm, $A$ and $B$ will be extended or it will be determined that there is no perfect matching cut \emph{separating}~$A$ and~$B$, that is a perfect matching cut $(X,Y)$ with $A\subseteq X$ and $B\subseteq Y$. 
We describe our algorithm by a list of reduction and branching rules given in preference order, i.e., in an execution of the algorithm on any instance of a subproblem one always applies the first rule applicable to the instance, which could be a reduction or a branching rule. A reduction rule produces one subproblem while a branching rule results in at least two subproblems, with different extensions of $A$ and $B$. 
% If any recursive call outputs that $G$ has a matching cut, then $G$ has indeed a matching cut, otherwise not.
Note that $G$ has a perfect matching cut that separates $A$ from $B$ if and only if in at least one recursive branch, extensions $A'$ of $A$ and $B'$ of $B$ are obtained such that $G$ has a perfect matching cut that separates $A'$ from $B'$.
Typically a rule assigns one or more free vertices, vertices of~$F$, either to $A$ or to $B$ and removes them from~$F$, that is, we always have~$F= V\setminus (A\cup B)$. 

Reduction Rules \ref{rule:R1} (except the last three items), \ref{rule:R2} (except the second item), \ref{rule:R3} and \ref{rule:R4} below are given in~\cite{KratschL16} for matching cuts. As perfect matching cuts are matching cuts, they remain correct for perfect matching cuts.

\begin{rrule}\label{rule:R1} 
\mbox{}
\begin{itemize}
\item
If a vertex in $A$ has two $B$-neighbors, or a vertex in $B$ has two $A$-neighbors then STOP\@: ``$G$ has no matching cut separating $A$, $B$''. 
\item
If $v\in F$, $|N(v)\cap A|\ge 2$ and  $|N(v)\cap B|\ge 2$ then STOP\@: ``$G$ has no matching cut separating $A$, $B$''. 
\item
If there is an edge $xy$ in $G$ such that $x\in A$ and $y\in B$ and $N(x) \cap N(y) \cap F \ne \emptyset$ then STOP\@: ``$G$ has no matching cut separating $A$, $B$''.
\item If a vertex in $A$ and a vertex in $B$ have three or more common neighbors in $F$ then STOP\@: ``$G$ has no matching cut separating $A$, $B$''. 
\item 
If a vertex in $A$ \textup{(}respectively in $B$\textup{)} has no neighbor in $B\cup F$ \textup{(}respectively in $A\cup F$\textup{)} then STOP\@: ``$G$ has no \emph{perfect} matching cut separating $A$, $B$''. 
\item 
If there are $x\in A$ and $y\in B$ such that $N(x)\cap F = N(y)\cap F =\{v\}$ then STOP\@: ``$G$ has no \emph{perfect} matching cut separating $A$, $B$''. 
\end{itemize}
\end{rrule}

\begin{rrule}\label{rule:R2} 
\mbox{}
\begin{itemize}
\item
If $v\in F$ has at least $2$ $A$-neighbors \textup{(}respectively $B$-neighbors\textup{)} then $A:=A\cup \{v\}$ \textup{(}respectively $B:=B\cup\{v\}$\textup{)}. 
\item
If $v\in F$ with $|N(v)\cap N(z)\cap F|\ge 3$ for some $z\in A$ \textup{(}respectively $z\in B$\textup{)} then $A:=A\cup \{v\}\cup (N(v)\cap N(z)\cap F)$ \textup{(}respectively $B:=B\cup \{v\}\cup (N(v)\cap N(z)\cap F)$\textup{)}.  
\end{itemize}
\end{rrule} 

\begin{rrule}\label{rule:R3} 
If $x\in A$ \textup{(}respectively $y\in B$\textup{)} has two adjacent $F$-neighbors $u,v$ then $A:=A\cup \{u,v\}$ \textup{(}respectively $B:=B\cup \{u,v\}$\textup{)}.
\end{rrule}

\begin{rrule}\label{rule:R4} 
If there is an edge $xy$ in $G$ such that $x\in A$ and $y\in B$ then add $N(x) \cap F$ to $A$, and add $N(y)\cap F$ to $B$. 
\end{rrule}

If none of these reduction rules can be applied then the following facts hold: 

\begin{itemize}
\item The edge cut $E(A,B)$ is a (not necessary perfect) matching cut of $G[A\cup B]=G-F$ due to Reduction Rule~\ref{rule:R1}. Moreover, any vertex in $A$ and any vertex in $B$ have at most two common neighbors in $F$.  
\item Every vertex in $F$ is adjacent to at most one vertex in $A$ and at most one vertex in $B$ due to Reduction Rule~\ref{rule:R2}.
\item The neighbors in $F$ of any vertex in $A$ and the neighbors in $F$ of any vertex in $B$ form an independent set due to Reduction Rule~\ref{rule:R3}, and  
\item Every vertex in $A$ adjacent to a vertex in $B$ has no neighbor in $F$ and every vertex in $B$ adjacent to a vertex in $A$ has no neighbor in $F$ due to Reduction Rule~\ref{rule:R4}. 
\end{itemize}

Reduction Rule~\ref{rule:R5} below is given in \cite{KomusiewiczKL20} and remains correct for perfect matching cuts.

\begin{rrule}\label{rule:R5} %from KKL-DAM for mc
If there are vertices $u,v\in F$ such that $N(u)=N(v)=\{x,y\}$ with $x\in A, y\in B$, then $A:=A\cup\{u\}$, $B:=B\cup\{v\}$.
\end{rrule}

The remaining reduction rules work for perfect matching cuts but not for matching cuts in general.
 
\begin{rrule}\label{rule:R6} 
If $x\in A$ \textup{(}respectively $y\in B$\textup{)} has exactly one neighbor $v\in F$ then $B:=B\cup\{v\}$ \textup{(}respectively $A:=A\cup\{v\}$\textup{)}. 
\end{rrule}
\begin{proof}[of safeness] 
Let $x\in A$ with $N(x)\cap F=\{v\}$. By Reduction Rule~\ref{rule:R4}, $N(x)\cap B=\emptyset$. If $(X,Y)$ is a perfect matching separating $A$ and $B$, then $N(x)\setminus\{v\}\subseteq X$, hence the neighbor~$v$ of~$x$ must belong to~$Y$. The case $y\in B$ is symmetric.
\qed
\end{proof}

\begin{rrule}\label{rule:R7} 
Let $z\in A$ \textup{(}respectively $z\in B$\textup{)} and let $v\in N(z)\cap F$.
\begin{itemize}
\item
If $\deg(v)=1$ then $B:=B\cup\{v\}$ \textup{(}respectively $A:=A\cup\{v\}$\textup{)}. 
\item If $\deg(v)=2$ and $w\in F$ is other neighbor of $v$ then $B:=B\cup\{w\}$ \textup{(}respectively $A:=A\cup\{w\}$\textup{)}.
\end{itemize}
\end{rrule}
\begin{proof}[of safeness] 
Let $z\in A$ and $v\in N(z)\cap F$. Let $(X,Y)$ be a perfect matching of~$G$ separating $A$ and $B$. 
If $z$ is the only neighbor of $v$, then, as $z\in X$, $v$ must belong to~$Y$. 
If $N(v)=\{z,w\}$ with $w\in F$, then $w$ must belong to $Y$, otherwise both neighbors of~$v$ were in $X$. The case $z\in B$ is symmetric.
\qed
\end{proof}

\begin{rrule}\label{rule:R8} %Auch für mc !!!
Let $x\in A$ and $y\in B$ with $|N(x)\cap N(y)\cap F|=2$. 
If $|N(x)\cap F|\ge 3$ or $|N(y)\cap F|\ge 3$ then $A:=A\cup N(x)\setminus N(y)$, $B:=B\cup N(y)\setminus N(x)$.
\end{rrule}
\begin{proof}[of safeness] 
Assume that $(X,Y)$ is a perfect matching cut of~$G$ separating~$A$ and~$B$. Then $N(x)\cap N(y)\cap F$ must contain one vertex in~$X$ and one vertex in~$Y$. Hence $N(x)\setminus N(y)\allowbreak\subseteq X$ and $N(y)\setminus N(x)\subseteq Y$.
\qed
\end{proof}

We now describe the branching rules; see also Fig.~\ref{fig:B1B2B3}, \ref{fig:B4B5} and~\ref{fig:B6B7}. 
The correctness of all branching rules follows from the fact that, in any perfect matching cut $(X,Y)$ separating $A$ and $B$, every vertex in $X$ has exactly one neighbor in $Y$ and every vertex in $Y$ has exactly one neighbor in $X$. 
Thus, if some vertex in $A$ has no neighbor in $B$, it must have a neighbor in~$F$ that must go to~$Y$, and if some vertex in $B$ has no neighbor in~$A$, it must have a neighbor in~$F$ that must go to~$X$. Note that by Reduction Rule~\ref{rule:R6}, every vertex in $A\cup B$ has none or at least two neighbors in~$F$. By Reduction Rule~\ref{rule:R1}, every two vertices $x\in A$ and $y\in B$ have at most two common neighbors in~$F$.

To determine the branching vectors which correspond to our branching rules, we set the size of an instance $(G,A,B)$ as its number of free vertices,
i.e., $|V(G)|-|A|-|B|$. 
%There are seven branching rules. 
%All cases of vertices in $A\cup B$ having at least two neighbors in $F$ will be covered by the branching rules.  
Vertices in $A\cup B$ having exactly two neighbors in $F$ will be covered by the first four branching rules. 
%, vertices with at least three neighbors in $F$ will be covered by the last four branching rules.

\begin{figure}[H]
\tikzstyle{vertexVi}=[circle,inner sep=1.5pt,fill=black]
\tikzstyle{ab}=[circle,inner sep=1.5pt,fill=gray]
\tikzstyle{vertexY}=[draw,circle,inner sep=1.5pt]
%\tikzstyle{vertex}=[draw, rectangle,inner sep=1.5pt]
\tikzstyle{vertex}=[draw,circle,inner sep=1.5pt] 
\begin{center}
\begin{tikzpicture}[scale=.3]
%(B1)
\filldraw[fill=black!10!white, draw=none] (1,7) rectangle (5,11);
\node[] at (1.8,10.2) {$A$};
\node[vertexVi] (x) at (3.5,8) [label=above:$x$] {};
\filldraw[fill=black!10!white, draw=none] (7,7) rectangle (11,11);
\node[] at (10.2,10.2) {$B$};

\node[ab] (a) at (4.3,10) {};
\node[ab] (b) at (8.2,10) {};
\draw[very thin,gray]  (4,10.5)--(a)--(b)--(9,9.5);
\draw[very thin,gray]  (4.5,10.5)--(a); \draw[very thin,gray] (b)--(8,10.5);
\node[vertexVi] (y) at (8,8)  [label=above:$y$] {}; 

\filldraw[fill=black!10!white, draw=none] (1,1) rectangle (11,5.5);
\node[] at (1.8,1.8) {$F$};
\node[] at (6.5,0) {Branching Rule \ref{rule:B1}};

\node[vertex] (u) at (3,4)  [label=below:$u$] {}; 
\node[vertex] (v) at (8.5,4)  [label=below:$v$] {}; 

\draw[thick] (u) -- (x) -- (v);
\draw[thick] (u) -- (y)--(v);
\end{tikzpicture}
\qquad
\begin{tikzpicture}[scale=.3]
%(B2)
\filldraw[fill=black!10!white, draw=none] (1,7) rectangle (5,11);
\node[] at (1.8,10.2) {$A$};
\node[vertexVi] (x) at (3.5,8) [label=above:$x$] {};
\filldraw[fill=black!10!white, draw=none] (7,7) rectangle (12.5,11);
\node[] at (11.8,10.2) {$B$};

\node[ab] (a) at (4.3,10) {};
\node[ab] (b) at (8.2,10) {};
\draw[very thin,gray]  (4,10.5)--(a)--(b)--(9,9.5);
\draw[very thin,gray]  (4.5,10.5)--(a); \draw[very thin,gray] (b)--(8,10.5);
\node[vertexVi] (y1) at (8,8)  [label=above:$y_1$] {}; 
\node[vertexVi] (y2) at (11,8)  [label=above:$y_2$] {}; 

\filldraw[fill=black!10!white, draw=none] (1,1) rectangle (12.5,5.5);
\node[] at (1.8,1.8) {$F$};
\node[] at (6.5,0) {Branching Rule \ref{rule:B2}};

\node[vertex] (u) at (3,4)  [label=below:$u$] {}; 
\node[vertex] (v) at (6,4)  [label=below:$v$] {}; 
\filldraw[fill=black!18!white, draw] (7.5,3.5) rectangle (9,4.5);
%\node[vertex] (n1) at (16,4) {};
\node[] at (8.3,2.5) {\small $N_1$};
\filldraw[fill=black!18!white, draw] (10,3.5) rectangle (11.5,4.5);
%\node[vertex] (n2) at (16,4) {};
\node[] at (11,2.5) {\small $N_2$};

\draw (7.5,4.4)--(y1)--(9,4.5); %\draw (y1)--(n1);
\draw (10,4.4)--(y2)--(11.5,4.5); %\draw (y2)--(n2);
\draw[thick] (u) -- (x) -- (v);
\draw[thick] (u) -- (y1); \draw[thick] (v)--(y2);
\end{tikzpicture}
\qquad
\begin{tikzpicture}[scale=.3]
%(B3)
\filldraw[fill=black!10!white, draw=none] (1,7) rectangle (5,11);
\node[] at (1.8,10.2) {$A$};
\node[vertexVi] (x) at (3.5,8) [label=above:$x$] {};
\filldraw[fill=black!10!white, draw=none] (7,7) rectangle (11,11);
\node[] at (10.2,10.2) {$B$};

\node[ab] (a) at (4.3,10) {};
\node[ab] (b) at (8.2,10) {};
\draw[very thin,gray]  (4,10.5)--(a)--(b)--(9,9.5);
\draw[very thin,gray]  (4.5,10.5)--(a); \draw[very thin,gray] (b)--(8,10.5);
\node[vertexVi] (y) at (8,8)  [label=above:$y$] {}; 

\filldraw[fill=black!10!white, draw=none] (1,1) rectangle (11,5.5);
\node[] at (1.8,1.8) {$F$};
\node[] at (6.5,0) {Branching Rule \ref{rule:B3}};

\node[vertex] (u) at (3,4)  [label=below:$u$] {}; 
\node[vertex] (v) at (8.5,4)  [label=below:$v$] {}; 

\draw[thick] (u) -- (x) -- (v);
\draw[thick] (y)--(v);
\filldraw[fill=black!18!white, draw] (4,2) rectangle (5.5,3); 
\node at (6.3,2.5) {\small $N$};
\draw (4,2.8)--(u)--(5.5,3);
\end{tikzpicture}
\end{center}
\caption{When Branching Rules \ref{rule:B1}, \ref{rule:B2} and \ref{rule:B3} are applicable.}\label{fig:B1B2B3}
\end{figure}
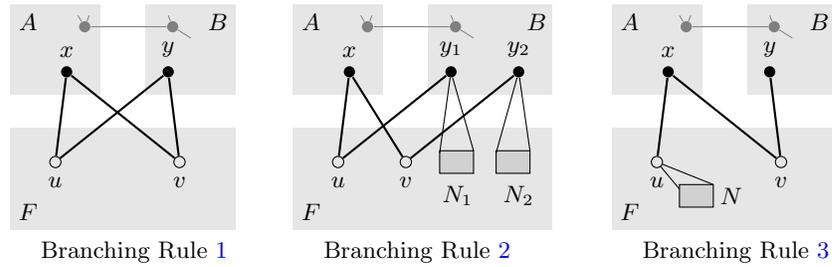

\begin{brule}\label{rule:B1} 
Let $x\in A$ and $y\in B$ with $N(x)\cap N(y)\cap F=\{u,v\}$. By Reduction Rule~\textup{\ref{rule:R8}}, $N(x)\cap F=N(y)\cap F=\{u,v\}$. 
We branch into two subproblems. 
\begin{itemize}
\item
First, add $N[u]\cap F$ to~$A$. Then $N[v]\cap F$ has to be added to~$B$. 
\item
Second, add $N[u]\cap F$ to~$B$. Then $N[v]\cap F$ has to be added to~$A$. 
\end{itemize}
\end{brule}   
The branching vector of Branching Rule~\ref{rule:B1} is 
\[\big(|(N[u]\cup N[v])\cap F|, |(N[u]\cup N[v])\cap F|\big).\] 
By Reduction Rule~\ref{rule:R5}, $|(N[u]\cup N[v])\cap F|\ge 3$, hence the branching factor of Branching Rule~\ref{rule:B1} is at most $\tau(3,3)=\sqrt[3]{2}<1.2560$. %1.25993$

\begin{brule}\label{rule:B2} 
Let $x\in A$ with $N(x)\cap F=\{u,v\}$ and $N(u)\cap B=\{y_1\}$, $N(v)\cap B=\{y_2\}$. 
We branch into $2$ subproblems. 
\begin{itemize}
\item First, add $u$ to~$B$. Then $v$ has to be added to $A$ and $N_2:= N(y_2)\cap F\setminus\{v\}$ has to be added to~$B$.
\item Second, add $v$ to~$B$. Then $u$ has to be added to $A$ and $N_1:= N(y_1)\cap F\setminus\{u\}$ has to be added to~$B$.
\end{itemize}
Symmetrically for $y\in B$ with $N(y)\cap F=\{u,v\}$ and $N(u)\cap A=\{x_1\}$, $N(v)\cap A=\{x_2\}$.  
\end{brule}
By Branching Rule~\ref{rule:B1}, $v\not\in N_1$, $u\not\in N_2$. Hence, the branching vector of Branching Rule~\ref{rule:B2} is 
\[\big(2+|N_1|, 2+|N_2|\big).\] 
By Reduction Rule~\ref{rule:R6}, $|N_1|\ge 1, |N_2|\ge 1$. %By Reduction Rule~\ref{rule:R6}, $r\ge 2$. 
Hence the branching factor is at most $\tau(3,3)=\sqrt[3]{2}<1.2560$. 

\begin{brule}\label{rule:B3} 
Let $x\in A$ with $N(x)\cap F=\{u,v\}$ and $N(u)\cap B=\emptyset$, $N(v)\cap B=\{y\}$.
We branch into two subproblems. 
\begin{itemize}
\item
First, add $u$ to~$B$. Then $v$ has to be added to $A$ and $N:=N(u)\cap F$ has to be added to~$B$. 
\item
Second, add $v$ to~$B$. Then $u$ has to be added to~$A$. %Correctness by def. of PMC !
\end{itemize}
Symmetrically for $y\in B$ with $N(y)\cap F=\{u,v\}$ and $N(u)\cap A=\emptyset$, and $N(v)\cap A=\{x\}$.
\end{brule}
The branching vector of Branching Rule~\ref{rule:B3} is 
\[\big(2+|N|, 2\big).\] 
By Reduction Rule~\ref{rule:R7}, $|N|\ge 2$, hence the branching factor of Branching Rule~\ref{rule:B3} is at most $\tau(4,2)<\textcolor{red}{1.2721}$. 

\begin{brule}\label{rule:B4} 
Let $x\in A$ with $N(x)\cap F=\{u_1,u_2,\ldots,u_r\}$, $r\ge 2$, and $N(u_i)\cap B\allowbreak =\emptyset$, $1\le i\le r$. 
We branch into $r$ subproblems. For each $1\le i\le r$, the instance of the $i$-th subproblem is obtained by adding $u_i$ to~$B$. Then $N(x)\cap F\setminus\{u_i\}$ has to be added to~$A$ and $N_i:=N(u_i)\cap F$ has to be added to~$B$.\\
Symmetrically for $y\in B$ with $N(y)\cap F=\{v_1,v_2,\ldots,v_r\}$ and $v_i$ has no neighbor in $A$, $1\le i\le r$.  
\end{brule}  
The branching vector of Branching Rule~\ref{rule:B4} is 
\[\big(r+|N_1|, r+|N_2|,\ldots, r+|N_r|\big).\]  
By Reduction Rule~\ref{rule:R7}, $|N_i|\ge 2$, hence the branching factor of Branching Rule~\ref{rule:B4} is at most $\tau(r+2,r+2,\ldots, r+2)=\sqrt[r+2]{r}<1.2600$. 
%wolfram alpha: mxx\sqrt[n+2]{n} approx 1.2605 at n=4.3191

\begin{figure}[H]
\tikzstyle{vertexVi}=[circle,inner sep=1.5pt,fill=black]
\tikzstyle{ab}=[circle,inner sep=1.5pt,fill=gray]
\tikzstyle{vertexY}=[draw,circle,inner sep=1.5pt]
%\tikzstyle{vertex}=[draw, rectangle,inner sep=1.5pt]
\tikzstyle{vertex}=[draw,circle,inner sep=1.5pt] 
\begin{center}
\begin{tikzpicture}[scale=.3]
%(B4)
\filldraw[fill=black!10!white, draw=none] (1,9) rectangle (8,13);
\node[] at (1.8,12.2) {$A$};
\node[vertexVi] (x) at (5,10) [label=above:$x$] {};
\filldraw[fill=black!10!white, draw=none] (10,9) rectangle (18,13);
\node[] at (17.2,12.2) {$B$};

\node[ab] (a) at (7,12) {};
\node[ab] (b) at (11,12) {};
\draw[very thin,gray]  (6,12.5)--(a)--(b)--(12,12.5);
\draw[very thin,gray]  (6.5,11)--(a); \draw[very thin,gray] (b)--(12.5,11.5);

\filldraw[fill=black!10!white, draw=none] (1,1) rectangle (18,7);
\node[] at (1.8,1.8) {$F$};
\node[] at (8.5,0) {Branching Rule \ref{rule:B4}};

% \node[vertex] (u1) at (3,4) {}; % [label=right:$u$] {}; 
% \node[vertex] (ui) at (5,4)  [label=below:$u_i$] {}; 
% \node[vertex] (ur) at (7,4)  {}; 
\node[vertex] (u1) at (6,4) [label=below:$u_1$] {}; 
\node[vertex] (ui) at (9,4) [label=below:$u_i$] {}; 
\node[vertex] (ur) at (12.5,5) [label=below:$u_r$] {}; 
\filldraw[fill=black!18!white, draw] (11,2) rectangle (13,3);
%\node[vertex] (ni) at (16,4) {};
\node[] at (13.8,2.5) {\small $N_i$};
\draw (11,2.3)--(ui)--(13,3);

\draw[thick] (u1) -- (x) -- (ui);
\draw[thick] (x) -- (ur);

\end{tikzpicture}
\qquad
\begin{tikzpicture}[scale=.3]
%(B5)
\filldraw[fill=black!10!white, draw=none] (1,9) rectangle (8,13);
\node[] at (1.8,12.2) {$A$};
\node[vertexVi] (x) at (5,10) [label=above:$x$] {};
\filldraw[fill=black!10!white, draw=none] (10,9) rectangle (18,13);
\node[] at (17.2,12.2) {$B$};

\node[ab] (a) at (7,12) {};
\node[ab] (b) at (11,12) {};
\draw[very thin,gray]  (6,12.5)--(a)--(b)--(12,12.5);
\draw[very thin,gray]  (6.5,11)--(a); \draw[very thin,gray] (b)--(12.5,11.7);
\node[vertexVi] (y1) at (12,10) [label=above:$y_1$] {}; 
\node[vertexVi] (yj) at (14,10) [label=above:$y_j$] {}; 
\node[vertexVi] (yq) at (16,10) [label=above:$y_q$] {}; 

\filldraw[fill=black!10!white, draw=none] (1,1) rectangle (18,7);
\node[] at (1.8,1.8) {$F$};
\node[] at (8.5,0) {Branching Rule \ref{rule:B5}};

\node[vertex] (u1) at (3,4)  [label=below:$u_1$] {}; 
\node[vertex] (ui) at (5,4)  [label=below:$u_i$] {}; 
\node[vertex] (up) at (7,4)  [label=below:$u_p$] {}; 
\node[vertex] (v1) at (9,4)  [label=below:$v_1$] {}; 
\node[vertex] (vj) at (11,4) [label=below:$v_j$] {}; 
\node[vertex] (vq) at (13,4) [label=below:$v_q$] {}; 
\filldraw[fill=black!18!white, draw] (15,3.5) rectangle (17,4.5);
%\node[vertex] (nj) at (16,4) {};
\node[] at (16,2.5) {\small $N_j$};

\draw (15,4.3)--(yj)--(17,4.5); %\draw (yj)--(nj);
\draw (12.2,8.5)--(y1)--(12.6,8.6); \draw (16.2,8.5)--(yq)--(16.6,8.6);
\draw[thick] (u1) -- (x) -- (ui);
\draw[thick] (up) -- (x) -- (v1);
\draw[thick] (vj) -- (x) -- (vq);
\draw[thick] (v1) -- (y1); \draw[thick] (vj) -- (yj); \draw[thick] (vq) -- (yq);
% \draw[thin] (1.5,3.5)--(u)--(2.5,4);
\end{tikzpicture}
\end{center}
\caption{When Branching Rules \ref{rule:B4} and \ref{rule:B5} are applicable.}\label{fig:B4B5}
\end{figure}
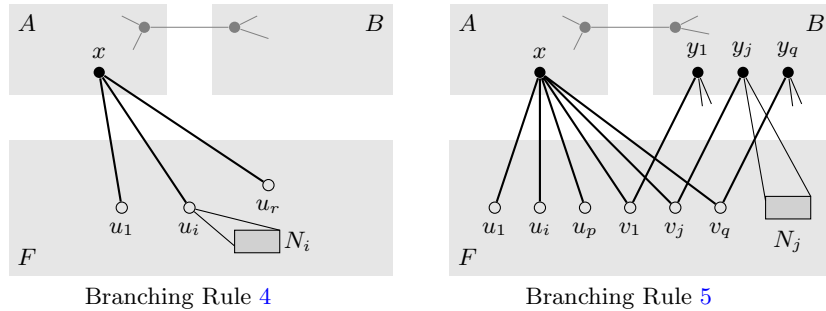

Branching Rules~\ref{rule:B1} and~\ref{rule:B4} together with the remaining branching rules cover vertices in $A\cup B$ having at least three neighbors in $F$. 
Branching Rule~\ref{rule:B5} deals with the case $z\in A$ (respectively $z\in B$) in which at least two vertices in $N(z)\cap F$ have neighbors in $B$ (respectively in $A$).

\begin{brule}\label{rule:B5} 
Let $x\in A$ with $N(x)\cap F=\{u_1,\ldots, u_p, v_1, v_2,\ldots,v_q\}$, $p\ge 0$, $q\ge 2$, such that $N(u_i)\cap B=\emptyset$, $1\le i\le p$ and $N(v_j)\cap B=\{y_j\}$, $1\le j\le q$.  
We branch into $r=p+q$ subproblems. 
\begin{itemize}
\item
For each $1\le i\le p$, the instance of the $i$-th subproblem is obtained by adding $u_i$ to~$B$. Then $N(x)\cap F\setminus\{u_i\}$ has to be added to~$A$ and all $N_j:= N(y_j)\cap F\setminus\{v_j\}$, $1\le j\le q$, have to be added to~$B$. 
\item
For each $1\le j\le q$, the instance of the $p+j$-th subproblem is obtained by adding $v_j$ to~$B$. Then $N(x)\cap F\setminus\{v_j\}$ has to be added to $A$ and all $N_k:= N(y_j)\cap F\setminus\{v_j\}$, $1\le k\le q$, $k\not=j$, have to be added to $B$.
\end{itemize}
Symmetrically for $y\in B$ with $N(y)\cap F=\{u_1,\ldots, u_p, v_1, v_2,\ldots,v_q\}$, $p\ge0$, $q\ge2$ such that $N(u_i)\cap A=\emptyset$, $1\le i\le p$ and $N(v_j)\cap A=\{x_j\}$, $1\le j\le q$. 
\end{brule}
By Branching Rule~\ref{rule:B1} and Reduction Rule~\ref{rule:R2}, $N_j$ are pairwise disjoint and $N_j\cap\{v_1,\ldots,v_q\}\allowbreak =\emptyset$. 
Hence, the branching vector of Branching Rule~\ref{rule:B5} is 
\[\big(r+\sum_j|N_j|, \dots, r+\sum_j|N_j|, r+\sum_{k\not=1}|N_k|, \dots, r+\sum_{k\not=q}|N_k|\big).\] 
Due to Branching Rules~\ref{rule:B1}--\ref{rule:B4}, each $y_j$ has at least three neighbors in $F$.  
%(That is the reason, why we do not consider Branching Rule~\ref{rule:B4} as a special case of Branching Rule~\ref{rule:B5} by setting $p=0$.) %but consider B4 first!
Hence $|N_j|\ge 2$, $1\le j\le q$. Thus, the branching factor is at most $\tau(r+2q,\dots, r+2q, r+2(q-1),\dots, r+2(q-1))\le  \tau(r+4,\ldots,r+4,r+2,\dots,r+2)<\tau(r+2,\dots,r+2)=\sqrt[r+2]{r}<1.2600.$ % see also braching rule 3    

The last two branching rules deal with the case $z\in A$ (respectively $z\in B$) in which exactly one vertex in $N(z)\cap F$ has a unique neighbor in $B$ (respectively in $A$).

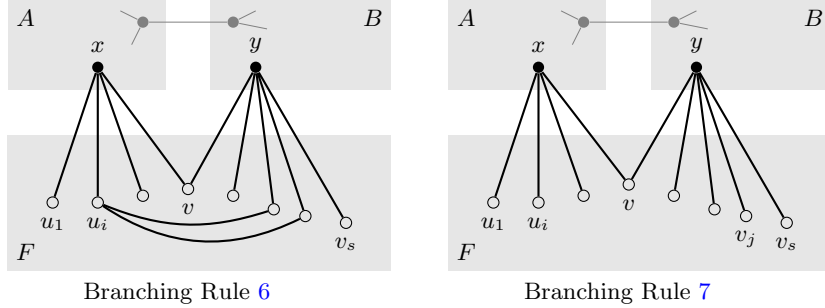
\begin{figure}[H]
\tikzstyle{vertexVi}=[circle,inner sep=1.5pt,fill=black]
\tikzstyle{ab}=[circle,inner sep=1.5pt,fill=gray]
\tikzstyle{vertexY}=[draw,circle,inner sep=1.5pt]
%\tikzstyle{vertex}=[draw, rectangle,inner sep=1.5pt]
\tikzstyle{vertex}=[draw,circle,inner sep=1.5pt] 
\begin{center}
\begin{tikzpicture}[scale=.3]
%(B5)
\filldraw[fill=black!10!white, draw=none] (1,9) rectangle (8,13);
\node[] at (1.8,12.2) {$A$};
\node[vertexVi] (x) at (5,10) [label=above:$x$] {};
\filldraw[fill=black!10!white, draw=none] (10,9) rectangle (18,13);
\node[] at (17.2,12.2) {$B$};

\node[ab] (a) at (7,12) {};
\node[ab] (b) at (11,12) {};
\draw[very thin,gray]  (6,12.5)--(a)--(b)--(12,12.5);
\draw[very thin,gray]  (6.5,11)--(a); \draw[very thin,gray] (b)--(12.5,11.7);
\node[vertexVi] (y) at (12,10)  [label=above:$y$] {}; 

\filldraw[fill=black!10!white, draw=none] (1,1) rectangle (18,7);
\node[] at (1.8,1.8) {$F$};
\node[] at (8.5,0) {Branching Rule \ref{rule:B6}};

\node[vertex] (u1) at (3,4) [label=below:$u_1$] {}; 
\node[vertex] (ui) at (5,4)  [label=below:$u_i$] {}; 
\node[vertex] (ur) at (7,4.3)  {}; 
\node[vertex] (v) at (9,4.6)  [label=below:$v$] {}; 
\node[vertex] (v1) at (11,4.3) {}; %[label=below:$v_j$] {}; 
\node[vertex] (vj) at (12.8,3.7) {};
\node[vertex] (vk) at (14.2,3.4) {};
\node[vertex] (vs) at (16,3.1) [label=below:$v_s$] {}; 
%\filldraw[fill=black!18!white, draw] (15,3.5) rectangle (17,4.5);
%\node[vertex] (nj) at (16,4) {};
%\node[] at (16,2.5) {\small $N_j$};

\draw[thick] (u1) -- (x) -- (ui);
\draw[thick] (ur) -- (x) -- (v);
\draw[thick] (v) -- (y); \draw[thick] (v1) -- (y) --(vj); \draw[thick] (vk) -- (y)--(vs);
\draw[thick] (ui) to[bend angle=20, bend right] (vj); \draw[thick] (ui) to[bend angle=30, bend right] (vk); 
\end{tikzpicture}
\qquad
\begin{tikzpicture}[scale=.3]
%(B5)
\filldraw[fill=black!10!white, draw=none] (1,9) rectangle (8,13);
\node[] at (1.8,12.2) {$A$};
\node[vertexVi] (x) at (5,10) [label=above:$x$] {};
\filldraw[fill=black!10!white, draw=none] (10,9) rectangle (18,13);
\node[] at (17.2,12.2) {$B$};

\node[ab] (a) at (7,12) {};
\node[ab] (b) at (11,12) {};
\draw[very thin,gray]  (6,12.5)--(a)--(b)--(12,12.5);
\draw[very thin,gray]  (6.5,11)--(a); \draw[very thin,gray] (b)--(12.5,11.7);
\node[vertexVi] (y) at (12,10)  [label=above:$y$] {}; 

\filldraw[fill=black!10!white, draw=none] (1,1) rectangle (18,7);
\node[] at (1.8,1.8) {$F$};
\node[] at (8.5,0) {Branching Rule \ref{rule:B7}};

\node[vertex] (u1) at (3,4) [label=below:$u_1$] {}; 
\node[vertex] (ui) at (5,4)  [label=below:$u_i$] {}; 
\node[vertex] (ur) at (7,4.3)  {}; 
\node[vertex] (v) at (9,4.8)  [label=below:$v$] {}; 
\node[vertex] (v1) at (11,4.3) {}; 
\node[vertex] (vj) at (12.8,3.7) {};
\node[vertex] (vk) at (14.2,3.4) [label=below:$v_j$] {}; 
\node[vertex] (vs) at (16,3.1)  [label=below:$v_s$] {}; 
%\filldraw[fill=black!18!white, draw] (15,3.5) rectangle (17,4.5);
%\node[vertex] (nj) at (16,4) {};
%\node[] at (16,2.5) {\small $N_j$};

\draw[thick] (u1) -- (x) -- (ui);
\draw[thick] (ur) -- (x) -- (v);
\draw[thick] (v) -- (y); \draw[thick] (v1) -- (y) --(vj); \draw[thick] (vk) -- (y)--(vs);
%\draw[dashed] (ui) to[bend angle=20, bend right] (vj); \draw[dashed] (ur) to[bend angle=30, bend right] (vk); 
\end{tikzpicture}
\end{center}
\caption{When Branching Rules \ref{rule:B6} and \ref{rule:B7} are applicable.}\label{fig:B6B7}
\end{figure}

\begin{brule}\label{rule:B6} 
Let $x\in A$ with $N(x)\cap F=\{u_1, u_2,\ldots, u_r, v\}$, $r\ge 2$, such that $N(u_i)\cap B=\emptyset$, $1\le i\le r$, and $N(v)\cap B=\{y\}$.   
Write $N(y)\cap F\setminus\{v\}=\{v_1,\ldots,v_s\}$, $s\ge 2$. 
Assume that some $u_i$ has two neighbors in $\{v_1,\ldots,v_s\}$.    
We branch into $2$ subproblems. 
\begin{itemize}
\item First, add $v$ to $A$. Then $\{v_1,\ldots,v_s\}$ and $u_i$ have to be added to $B$, and $\{u_1,\ldots,u_r\}\setminus\{u_i\}$ has to be added to $A$. 
\item Second, add $v$ to $B$. Then $\{u_1,\ldots,u_r\}$ has to be added to $A$.
\end{itemize}
Symmetrically for $y\in B$ with $N(y)\cap F=\{u_1, u_2,\ldots, u_{r}, v\}$ such that $N(u_i)\cap A=\emptyset$, $1\le i\le r$, and $N(v)\cap A=\{x\}$ and some $u_i$ has two neighbors in $N(x)\cap F\setminus\{v\}$. 
\end{brule}
The branching vector of Branching Rule~\ref{rule:B6} is 
\[(r+s+1,r+1).\]
Since $r\ge 2$ and $s\ge 2$, we have $\tau(r+s+1,r+1)\le \tau(5,3)< 1.1939$.

\begin{brule}\label{rule:B7} 
Let $x\in A$ with $N(x)\cap F=\{u_1, u_2,\ldots, u_r, v\}$, $r\ge 2$, such that $N(u_i)\cap B=\emptyset$, $1\le i\le r$, and $N(v)\cap B=\{y\}$.   
Write $N(y)\cap F\setminus\{v\}=\{v_1,\ldots,v_s\}$, $s\ge 2$. 
We branch into $r+s$ subproblems. 
\begin{itemize}
\item
For each $1\le i\le r$, the instance of the $i$-th subproblem is obtained by adding $u_i$ to $B$. 
Then $\{u_1,\ldots,u_r\}\setminus\{u_i\}$ and $v$ have to be added to $A$, $N_i:=N(u_i)\cap F$ and $\{v_1,\dots,v_s\}$ have to be added to $B$. 
\item
For each $1\le j\le s$, the instance of the $r+j$-th subproblem is obtained by adding $v_j$ to $A$. 
Then $\{v_1,\dots,v_s\}\setminus\{v_j\}$ and $v$ have to be added to $B$, $M_j:=N(v_j)\cap F$ and $\{u_1,\dots,u_{r}\}$ have to be added to $A$. 
\end{itemize}
Symmetrically for $y\in B$ with $N(y)\cap F=\{u_1, u_2,\ldots, u_{r}, v\}$ such that $N(u_i)\cap A=\emptyset$, $1\le i\le r$, and $N(v)\cap A=\{x\}$. 
\end{brule}  
Write $\alpha_i=|N_i\cap\{v_1,\dots,v_s\}|$, $1\le i\le r$, and $\beta_j=|M_j\cap\{u_1,\dots,u_{r}\}|$, $1\le j\le s$. The branching vector of Branching Rule~\ref{rule:B7} is 
%{\small
\[\big(r+s+1+|N_1|-\alpha_1,\dots, r+s+1+|N_{r}|-\alpha_{r}, r+s+1+|M_1|-\beta_1,\dots,r+s+1+|M_s|-\beta_s\big).\] 
%}
By Reduction Rule~\ref{rule:R7}, $|N_i|\ge 2$. By Branching Rule~\ref{rule:B5}, $v_j$ has no neighbor in $A$, hence, by Reduction Rule~\ref{rule:R7}, $|M_j|\ge 2$. 
%By Reduction Rule~\ref{rule:R9}, $\alpha_i\le 1$, $\beta_j\le 1$. 
By Branching Rule~\ref{rule:B6}, $\alpha_i\le 1$, $\beta_j\le 1$. 
Hence the branching factor is at most $\tau(r+s+2,\dots,r+s+2)=\sqrt[r+s+2]{r+s}<1.2600$. %\le\sqrt[6]{4}<1.2600$.

The description of all branching rules is completed. 
Among all branching rules, Branching Rule~\ref{rule:B3} has the largest branching factor of~$1.2721$. Consequently, the running time of our algorithm is~$O^*(1.2721^{n})$. 

It remains to show that if none of the reduction rules and none of the branching rules is applicable to an instance $(G,A,B)$ then the graph $G$ has a perfect matching cut $(X,Y)$ such that $A\subseteq X$ and $B\subseteq Y$ if and only if $(A,B)$ is a perfect matching cut of $G$. 
In fact, if all reduction and branching rules are not longer applicable, then no vertex in $A\cup B$ has a neighbor in $F$. %Proof?
Hence, by connectedness of~$G$, $F=\emptyset$. Therefore, $G$ has a perfect matching cut separating $A$ and $B$ if and only if $(A,B)$ is a perfect matching cut. In summary, we obtain:

\begin{theorem}\label{thm:exact}
There is an algorithm for \PMC\ running in $O^*(1.2721^n)$ time.
\end{theorem}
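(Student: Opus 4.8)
The plan is to assemble the reduction and branching rules already stated into a single recursive procedure, verify its correctness by induction on the number of free vertices, and read off the running time from the recorded branching factors. Since a connected graph $G$ with a perfect matching cut $(X,Y)$ must contain at least one edge $ab$ with $a\in X$ and $b\in Y$, I would run, for each of the $O(m)$ edges $ab\in E(G)$, a recursive routine on the instance $(G,A,B)$ with $A=\{a\}$, $B=\{b\}$, $F=V(G)\setminus\{a,b\}$. On an instance $(G,A,B)$ the routine applies the \emph{first} applicable rule in the fixed preference order --- Reduction Rules \ref{rule:R1}--\ref{rule:R8}, then Branching Rules \ref{rule:B1}--\ref{rule:B7} --- where a reduction rule either aborts the branch or moves some free vertices into $A\cup B$ and recurses once, and a branching rule recurses on each of its listed subproblems. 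The routine returns success iff some leaf of its recursion reaches a state where $(A,B)$ is a perfect matching cut of $G$, and $G$ has a perfect matching cut iff the routine succeeds for at least one seed edge.

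For correctness I would argue three things. First, every reduction rule is safe: \ref{rule:R1}--\ref{rule:R5} are the matching-cut rules of \cite{KratschL16,KomusiewiczKL20} (restricted where needed to perfect matchings), and safeness of \ref{rule:R6}--\ref{rule:R8} is already proved above. Second, every branching rule is exhaustive: in a perfect matching cut $(X,Y)$ separating $A$ from $B$ every vertex has exactly one neighbor on the opposite side, which both justifies the case split in each rule and makes all the ``has to be added'' assignments forced; hence if $G$ has a perfect matching cut separating $A$ from $B$, then so does the instance of at least one child. Third --- and this is the step I expect to require the most care --- I would prove the \emph{coverage} property: if none of \ref{rule:R1}--\ref{rule:R8} applies and some $z\in A\cup B$ still has a neighbor in $F$, then one of \ref{rule:B1}--\ref{rule:B7} applies. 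Here I would use the invariants recorded after Reduction Rule~\ref{rule:R4} together with \ref{rule:R5}--\ref{rule:R8}: for $z\in A$, say, $z$ has no neighbor in $B$ and at least two in $F$, each free vertex has at most one neighbor in $B$, and after \ref{rule:R8} two common $F$-neighbors of $z$ and some $y\in B$ force $N(z)\cap F=N(y)\cap F$. A case analysis on whether $|N(z)\cap F|=2$ or $\ge 3$ and on how many $F$-neighbors of $z$ have a neighbor in $B$ (zero: \ref{rule:B4}; at least two: \ref{rule:B5}; exactly one: \ref{rule:B6} if the special adjacency condition of that rule holds, else \ref{rule:B7}; two of them with a common $B$-neighbor: \ref{rule:B1}; the remaining small cases: \ref{rule:B2}, \ref{rule:B3}) then covers all configurations, with the $z\in B$ cases symmetric. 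Consequently, at a leaf where no rule at all applies, no vertex of $A\cup B$ has a neighbor in $F$, so by connectedness $F=\emptyset$, and the answer there is precisely whether $(A,B)$ is a perfect matching cut of $G$; induction on $|F|$ then completes correctness.

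For the running time I would observe that every branching rule strictly decreases $|F|$ along each child by the amounts listed in its branching vector, and that the non-membership and disjointness facts invoked when deriving those vectors --- e.g.\ $v\notin N_1$ in Branching Rule~\ref{rule:B2} via \ref{rule:B1}, pairwise disjointness of the $N_j$ in \ref{rule:B5} via \ref{rule:B1} and \ref{rule:R2}, and $\alpha_i,\beta_j\le 1$ in \ref{rule:B7} via \ref{rule:B6} --- are exactly what keeps each branching factor at most $\tau(4,2)<1.2721$, the worst case being Branching Rule~\ref{rule:B3}. Since every reduction rule runs in polynomial time and either aborts the branch or strictly decreases $|F|$, the recursion tree for a single seed has polynomial depth and $O^*(1.2721^n)$ leaves, so the work per seed is $O^*(1.2721^n)$; summing over the $O(m)$ seeds is absorbed by the $O^*$ notation and yields the claimed bound. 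The only genuinely delicate bookkeeping, beyond routine verification of the branching-vector inequalities, is confirming that the preference order is chosen so that each rule's stated preconditions are in fact guaranteed by the rules of higher priority --- which is precisely what the coverage argument above certifies.
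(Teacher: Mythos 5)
Your proposal follows exactly the paper's own argument: seed the branching on each of the $O(m)$ candidate edges, apply the stated reduction and branching rules in preference order, justify correctness by safeness plus the forced-assignment/coverage argument leading to $F=\emptyset$ at rule-free leaves, and bound the running time by the worst branching factor $\tau(4,2)<1.2721$ from Branching Rule~\ref{rule:B3}. This matches the paper's proof in both structure and all the key estimates, so there is nothing substantive to add.
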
  

\section{Two polynomial solvable cases}\label{sec:polytime}
%==========================================================
In this section, we provide two graph classes in which \PMC\ is solvable in polynomial time. Both classes are well motivated by the hardness results. 

\subsection{Excluding a \textup{(}small\textup{)} tree of maximum degree three}
%----------------------------------------------------------------------------- 
Let $H$ be a fixed graph. A graph $G$ is $H$-free if $G$ contains no induced subgraph isomorphic to $H$. Since by Theorem~\ref{thm:bip+deg3+girth} 
\PMC\ remains $\NP$-complete on the class of graphs of maximum degree three and arbitrarily high girth, it is also $\NP$-complete on $H$-free graphs whenever~$H$ is outside this class, e.g. if $H$ has a vertex of degree larger than three or has a (fixed-size) cycle.
This suggests studying the computational complexity of \PMC\ restricted to $H$-free graphs for a fixed forest~$H$ with maximum degree at most three.  

%Our goal is a P versus NP-hardness classification for this problem, not an ...
\begin{wrapfigure}{r}{.25\textwidth}
\centering
\begin{tikzpicture}[scale=.3] 
\tikzstyle{vertex}=[draw,circle,inner sep=1.5pt] 
\node[vertex] (a) at (1,1) {};
\node[vertex] (b) at (3,1) {};
\node[vertex] (c) at (5,1) {};
\node[vertex] (d) at (7,1) {};
\node[vertex] (e) at (9,1) {};
\node[vertex] (f) at (5,3) {};

\draw (a)--(b)--(c)--(d)--(e);
\draw (c)--(f);
\end{tikzpicture}
%\caption{The tree $T$: \PMC\ is easier if $T$ is excluded}\label{fig:T}
\caption{The tree $T$.}\label{fig:T} 
\end{wrapfigure}
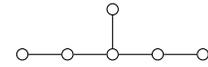
As the first step in this direction, we show that \PMC\ is solvable in polynomial time for $H$-free graphs, where $H$ is the tree $T$ with~6 vertices  obtained from the claw $K_{1,3}$ by subdividing two edges each with one new vertex; see Fig.~\ref{fig:T}. In particular, \PMC\ is polynomial time solvable for $K_{1,3}$-free graphs but hard for $K_{1,4}$-free graphs (by Theorem~\ref{thm:bip+deg3+girth}).

Given a connected $T$-free graph $G=(V,E)$, our algorithm works as follows. Fix an edge $ab\in E$ and decide if $G$ has a perfect matching cut $M=E(X,Y)$ separating~$A=\{a\}$ and~$B=\{b\}$. We use the notations and reduction rules from Section~\ref{sec:exact}. In addition, we need one new reduction rule; recall that $F=V\setminus(A\cup B)$. This additional reduction rule is correct for matching cuts in general and is already used in~\cite{ChenHLLP21}. For completeness, we give a correctness proof for perfect matching cuts. 
\begin{rrule}\label{rule:R10}
\mbox{}
\begin{itemize}
\item If there are vertices $u,v\in F$ with a common neighbor in $A$ and $|N(u)\cap N(v)\cap F|\ge 2$, then $A:=A\cup\{u,v\}$. 
\item If there are vertices $u,v\in F$ with a common neighbor in $B$ and $|N(u)\cap N(v)\cap F|\ge 2$, then $B:=B\cup\{u,v\}$.
\end{itemize}
\end{rrule}
\begin{proof}[of safeness] 
Let $u,v\in F$ with $N(u)\cap N(v)\cap A=\{x\}$ and $|N(u)\cap N(v)\cap F|\ge 2$.  
We show that $G$ has a perfect matching cut separating $A$, $B$ if and only if $G$ has a perfect matching cut separating $A\cup\{u,v\}$ and $B$. 
First, let $(X,Y)$ be a perfect matching cut of $G$ with $A\subseteq X$ and $B\subseteq Y$. If $u\in Y$ then, as $x\in A$, $N(u)\cap F$ must belong to $Y$ and $v$ must belong to $X$. But then, as $|N(v)\cap N(u)\cap F|\ge 2$, $v$ has two neighbors in $Y$, a contradiction. Thus, $u\in X$, and similarly, $v\in X$. That is $(X,Y)$ separates $A\cup\{u,v\}$ and $B$.  
The other direction is obvious: any perfect matching cut separating $A\cup\{u,v\}$ and~$B$ separates~$A$ and~$B$.

The second case is symmetric.\qed
\end{proof}

Now, we apply the Reduction Rules~\ref{rule:R1}--\ref{rule:R10} exhaustively. Note that this part takes polynomial time. If $F=V\setminus (A\cup B)$ is empty, then $G$ has a perfect matching cut separating $A$ and $B$ if and only if $(A,B)$ is a perfect matching cut of $G$. Verifying whether $(A,B)$ is a perfect matching cut also takes polynomial time. 
%This task also takes polynomial time. 

So, let us assume that $F\not=\emptyset$. Then due to the reduction rules (recall that $G$ is connected),   
\begin{itemize}
\item any vertex in $A$ (in $B$) having no neighbor in $B$ (in $A$) has at least two neighbors in $F$, %Note: if there is a vertex in A has no neighbor in B and no neighbor in F, then G has no perfect mc separating A and B.
and
\item any vertex in $A$ (in $B$) having a neighbor in $F$ has no neighbor in $B$ (in $A$).
\end{itemize}    
At this point, we will explicitly give an induced subgraph in $G$ isomorphic to the tree $T$ or correctly decide that $G$ has no perfect matching cut separating $A$ and $B$. Write 
\[A^*=\{x\in A\mid N(x)\cap B\not=\emptyset\},\, B^*=\{y\in B\mid N(y)\cap A\not=\emptyset\}.\]
Recall that $A^*\not=\emptyset$ and $B^*\not=\emptyset$, and there are no edges between $A^*\cup B^*$ and $F$, no edges between $A\setminus A^*$ and $B\setminus B^*$. 

Thus, as $G$ is connected and $F\not=\emptyset$, there is a vertex in $A\setminus A^*$ adjacent to a vertex in $A^*$, or there is a vertex in $B\setminus B^*$ adjacent to a vertex in $B^*$. 
By symmetry, let us assume that there is a vertex $x\in A\setminus A^*$ adjacent to a vertex $x^*\in A^*$. Let $y^*\in B^*$ be the unique neighbor of~$x$ in $B^*$. 
Recall that, every vertex in $(A\setminus A^*)\cup (B\setminus B^*)$ has at least two neighbors in $F$. 

First, suppose that there is a vertex $y\in B$ with $|N(x)\cap N(y)\cap F|\ge 2$. 
Let $u,v\in N(x)\cap N(y)\cap F$. If $(X,Y)$ is a perfect matching cut with $A\subseteq X$ and $B\subseteq Y$, then $u$ and $v$ must belong to different parts, say $u\in X, v\in Y$. Now, if there were some vertex $w\in N(u)\cap N(v)\cap F$, then $u$ would have two neighbors in $Y$ (if $w\in Y$) or $v$ would have two neighbors in $X$ (if $w\in X$). 
So, let us assume that $N(u)\cap N(v)\cap F=\emptyset$. Then due to Reduction Rule~\ref{rule:R5}, there exists a vertex $w\in N(u)\cap F\setminus N(v)$. Due to Reduction Rule~\ref{rule:R3}, $N(x)\cap F$ is an independent set, hence $w,u,x,x^*, y^*$ and $v$ induce the tree~$T$ in~$G$. 
Thus, we may assume that 
\begin{align}\label{fact1}
&\text{for any vertex $y\in B$, $|N(x)\cap N(y)|\le 1$.}
\end{align} %Subcase 1.1 is settled.

Next, observe that  
\begin{align}\label{fact2}
&\text{every vertex in $B\setminus B^*$ adjacent to a vertex in $N(x)$ is adjacent to $y^*$.}
\end{align}
This can be seen as follows: Let $z\in B\setminus B^*$ be adjacent to some $u\in N(x)$. Then $u\in F$.  By~(\ref{fact1}),~$z$ is non-adjacent to all vertices in $N(x)\cap F\setminus\{u\}$. Recall that some vertex $v\in N(x)\cap F\setminus\{u\}$ exists. So, if $z$ is not adjacent to $y^*$, then $z,u,x,x^*,y^*$ and $v$ induce the tree~$T$ in~$G$. 

Now, fix two vertices $u, v\in N(x)\cap F$. 
Suppose that $N(u)\cap B=\emptyset$. Then, due to Reduction Rules~\ref{rule:R7} and~\ref{rule:R10}, there exists a vertex $w\in N(u)\cap F\setminus N(v)$, and as above, $w,u,v,x,x^*$ and~$y^*$ induce the tree~$T$ in~$G$. 
Thus, we may assume that $N(u)\cap B\not=\emptyset$ and, by symmetry, $N(v)\cap B\not=\emptyset$. 

Let $y_1,y_2\in B\setminus B^*$ be the unique neighbors of $u$ and $v$ in $B$, respectively. 
By~(\ref{fact1}), $y_1$ is non-adjacent to $v$, and $y_2$ is non-adjacent to $u$. By~(\ref{fact2}), $y_1$ and $y_2$ are adjacent to $y^*$. 
If $y_1$ and $y_2$ are non-adjacent, then $u, y_1, y^*, y_2, v$ and $x^*$ induce the tree $T$. So, let us assume that~$y_1$ and~$y_2$ are adjacent. 
 
Let $u'\not=u$ be a second neighbor of $y_1$ in $F$, and $v'\not=v$ be a second neighbor of $y_2$ in $F$. 
By~(\ref{fact1}), $x$ is non-adjacent to $u'$ and $v'$. 
Now, consider two cases: 
\begin{itemize}
\item assume that $u$ and $v'$ are adjacent. Then $v', u, x, x^*, y^*$ and $v$ induce the tree~$T$ in~$G$, and 
\item assume that $u$ and $v'$ are non-adjacent. Then $v', y_2, y_1, u, x$ and $u'$ (if $u'$ and $v'$ are non-adjacent), or else $u', v', y_2, y^*, x^*$ and $v$ (if $u'$ and $v'$ are adjacent) induce the tree~$T$ in~$G$.
\end{itemize}
In each case, we reach a contradiction. 

Thus, we have seen that, in case $F\not=\emptyset$, $G$ has no perfect matching cut separating~$A$ and~$B$, or $G$ contains the tree~$T$ as an induced subgraph. 
So, after at most~$|E|$ rounds, each for a candidate $ab\in E$ and in polynomial time, our algorithm will find out whether~$G$ has a perfect matching cut at all. 
In summary, we obtain:
\begin{theorem}\label{thm:poly}
\PMC\ is solvable in polynomial time for $T$-free graphs.
\end{theorem}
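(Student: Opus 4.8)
The plan is to reuse the reduce-and-branch framework of Section~\ref{sec:exact}, but to show that on $T$-free inputs branching is never needed: whenever the current instance $(G,A,B)$ with $F=V\setminus(A\cup B)\neq\emptyset$ admits no further reduction, either $G$ has no perfect matching cut separating $A$ and $B$, or $G$ contains an induced copy of $T$. Since $G$ is $T$-free, only the first possibility can occur, so after exhaustively applying the reduction rules one either reaches $F=\emptyset$ (and then checks directly whether $(A,B)$ is a perfect matching cut) or correctly answers ``no''. Iterating over all $O(|E|)$ choices of the seed edge $ab$ then decides \PMC\ in polynomial time.

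Carrying this out, I would fix $ab\in E$, set $A=\{a\}$, $B=\{b\}$, and exhaustively apply Reduction Rules~\ref{rule:R1}--\ref{rule:R8} together with the new Reduction Rule~\ref{rule:R10}; each application is polynomial and strictly shrinks $F$, so the process halts in polynomial time. Afterwards the structure is tightly constrained: every vertex of $A$ (of $B$) not incident to a cut edge has at least two $F$-neighbors, the $F$-neighborhood of any such vertex is independent (Reduction Rule~\ref{rule:R3}), and there are no edges between $F$ and the set $A^*\cup B^*$ of endpoints of the cut $E(A,B)$, nor between $A\setminus A^*$ and $B\setminus B^*$. When $F\neq\emptyset$, connectedness forces (up to symmetry) a vertex $x\in A\setminus A^*$ adjacent to some $x^*\in A^*$; write $y^*$ for the unique neighbor of $x^*$ in $B$, so that $x^*y^*$ is a cut edge, and pick two $F$-neighbors $u,v$ of $x$.

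The bulk of the proof, and the step I expect to be the main obstacle, is the layered case analysis that now either produces an induced $T$ or rules out a perfect matching cut. The skeleton is: (i) $|N(x)\cap N(y)\cap F|\le 1$ for every $y\in B$, since otherwise two common neighbors are forced into opposite parts and Reduction Rule~\ref{rule:R5} yields a private $F$-neighbor completing a $T$ through $x,x^*,y^*$; (ii) every vertex of $B\setminus B^*$ adjacent to $N(x)$ is adjacent to $y^*$, else a $T$ appears; (iii) each of $u,v$ has a neighbor in $B\setminus B^*$ -- if not, Reduction Rules~\ref{rule:R7} and~\ref{rule:R10} supply a private $F$-neighbor and a $T$; (iv) these unique neighbors $y_1,y_2$ are both adjacent to $y^*$ by (ii), and if $y_1y_2\notin E$ a $T$ appears at once; (v) if $y_1y_2\in E$, take second $F$-neighbors $u'$ of $y_1$ and $v'$ of $y_2$, and split on whether $u$ is adjacent to $v'$ and whether $u'$ is adjacent to $v'$, each resulting subcase exhibiting an explicit induced $T$ on six named vertices. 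The delicate part throughout is verifying in each case that the six chosen vertices induce \emph{exactly} $T$, with no chords; for this one repeatedly uses that there are no edges between $F$ and $A^*\cup B^*$, that $N(x)\cap F$ is independent, and that each of $u,v$ has only one neighbor in $B$. Since $G$ is $T$-free, every ``$T$ appears'' branch is vacuous, so the case $F\neq\emptyset$ always terminates with ``no perfect matching cut separating $A$ and $B$'', and Theorem~\ref{thm:poly} follows.
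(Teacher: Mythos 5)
Your proposal follows essentially the same route as the paper's proof: seed with an edge $ab$, exhaustively apply Reduction Rules~\ref{rule:R1}--\ref{rule:R10}, and then run exactly the same case analysis around $x\in A\setminus A^*$, $x^*$, $y^*$, $u$, $v$, $y_1$, $y_2$, $u'$, $v'$ to conclude that $F\neq\emptyset$ forces either an induced $T$ or a correct ``no'' answer. The skeleton (i)--(v) matches the paper's steps one-for-one (and even corrects the paper's slip by taking $y^*$ as the unique $B$-neighbor of $x^*$), so there is nothing substantively different to flag.
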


\subsection{Interval, chordal and pseudo-chordal graphs}
%----------------------------------
Recall that a graph has girth at least $g$ if and only if it has no induced cycles of length less than $g$. 
Thus, Theorem~\ref{thm:bip+deg3+girth} implies that \PMC\ remains hard when restricted to graphs without short induced cycles. 
This suggests studying \PMC\ restricted to graphs without long induced cycles, i.e., $k$-chordal graphs. Here, 
given an integer $k\ge 3$, a graph is \emph{$k$-chordal} if it has no induced cycles of length larger than $k$; the 3-chordal graphs are known as chordal graphs.

In this subsection we show that \PMC\ can be solved in polynomial time when restricted to what we call pseudo-chordal graphs, that contain the class of 3-chordal graphs and thus known to have unbounded mim-width~\cite{Kang}.

We begin with a concise characterization of interval graphs having
perfect matching cuts, to yield a polynomial-time algorithm deciding if an interval graph has a perfect matching cut which is much simpler than what we get by the mim-width approach~\cite{Bui}.   

\begin{fact}
\label{cliquefact}
Let $G$ have a vertex set $U\subseteq V(G)$ such that $G[U]$ is
connected
with every edge of $G[U]$ belonging to a triangle.
Then if $(X,Y)$ is a perfect matching cut of $G$
we must have $U \subseteq X$ or $U \subseteq Y$.
\end{fact}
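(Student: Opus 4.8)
The statement is local and should follow from a short argument about how a perfect matching cut interacts with triangles. The plan is to first establish the base case: if $xyz$ is a triangle in $G$ and $(X,Y)$ is a perfect matching cut, then $x,y,z$ all lie on the same side. Indeed, since the edge cut $E(X,Y)$ is a matching, at most one edge of the triangle $xyz$ crosses between $X$ and $Y$; but a triangle has three edges forming an odd cycle, so it cannot be ``cut'' by a single crossing edge without leaving a vertex with the wrong parity of crossing edges. More directly: if, say, $x\in X$ and $y,z\in Y$, then $x$ has two neighbors ($y$ and $z$) in $Y$, contradicting that $(X,Y)$ is a (perfect) matching cut. Hence every triangle is monochromatic.

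Next I would propagate this along edges. Suppose $uv$ is an edge of $G[U]$; by hypothesis it lies in a triangle $uvw$, which by the base case is monochromatic, so $u$ and $v$ are on the same side of $(X,Y)$. Thus \emph{every} edge of $G[U]$ is monochromatic, i.e., both endpoints lie in $X$ or both lie in $Y$. Now use connectedness of $G[U]$: pick any vertex $u_0\in U$, say $u_0\in X$; for any other vertex $u\in U$ there is a path $u_0=w_0, w_1,\ldots,w_k=u$ inside $G[U]$, and walking along this path, each consecutive pair is monochromatic, so by induction on the path length all of $w_0,\ldots,w_k$ lie in $X$. Therefore $U\subseteq X$; symmetrically, if $u_0\in Y$ then $U\subseteq Y$.

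There is no real obstacle here; the only thing to be careful about is the base case, and in particular to note that it uses only that $E(X,Y)$ is a matching cut (no vertex in $X$ has two neighbors in $Y$ and vice versa), not the ``perfect'' part — the ``perfect'' hypothesis is not needed for this Fact, though it does no harm to keep it since the Fact is stated for perfect matching cuts. It is worth recording this, as it makes the Fact applicable to any clique or any triangulated connected subgraph. The argument is entirely combinatorial and takes only a few lines once the triangle observation is in place.
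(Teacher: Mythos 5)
Your proposal is correct and matches the paper's (very terse) argument: the paper likewise observes that a split of $U$ would force, via connectedness, a crossing edge of $G[U]$ whose triangle yields a vertex with two neighbors on the opposite side, contradicting the matching property of the cut. Your added remark that only the matching-cut property (not perfection) is used is accurate and consistent with the paper's reasoning.
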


This since otherwise we must have a triangle $K$ and two vertices $u, v$
with $u \in K \cap X$ and $v \in K \cap Y$ having a common neighbor
in $K$ so this cannot be a perfect matching cut.

If an interval graph $G$ has a cycle then it has a 3-clique. By Fact~\ref{cliquefact} 
these 3 vertices would have to belong to the same side
of the cut, and each would need to have a unique neighbor on the other
side of the cut. But then those 3 neighbors would form an asteroidal
triple, contradicting that $G$ was an interval graph. Thus
an interval graph which is not a tree does not have a perfect matching
cut.
A tree $T$ is an interval graph if and only if it does not have the
subdivided claw as a subgraph. Thus, $T$ is a caterpillar with basic
path $x_1,\ldots,x_k$, where $x_1$ and $x_k$ does not have a leaf attached,
while the other $x_i$ may have any number of leaves attached. 
(A \emph{caterpillar} is a tree with a (basic) path such that all vertices outside the path has a neighbor on the path.) 
If some $x_i$ has at least two leaves attached then $T$ does not have a
perfect matching cut $(X,Y)$, as $x_i \in X$ would imply that at least
one of those leaves is in $X$ and this leaf would not have a neighbor in
$Y$.
Since a leaf vertex and its neighbor must belong to opposite sides of
the cut, it is not hard to verify the following. 

\begin{fact}
An interval graph has a perfect matching cut if and only if it is a
caterpillar with basic path $x_1,\ldots,x_k$ such that any $x_i$ for $1 < i
< k$ has either zero or one leaf, and any maximal sub-path of
$x_1,\ldots,x_k$ with zero leaves contains an even number of vertices.
\end{fact}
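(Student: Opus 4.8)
The plan is to translate the perfect-matching-cut condition on a caterpillar into a purely local condition on its spine, using the fact that in a forest there is no parity obstruction to turning a perfect matching into a perfect matching cut. Throughout fix a caterpillar $T$ with basic path $x_1,\dots,x_k$ chosen so that $x_1$ and $x_k$ carry no leaves (e.g.\ take a longest path), call a spine vertex \emph{loaded} if it has a leaf attached and \emph{bare} otherwise, and recall that by the facts already established we may assume every $x_i$ with $1<i<k$ has at most one leaf, and that an interval graph that is not such a caterpillar has no perfect matching cut.

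First I would record the reduction step: \emph{in a tree $T$, a set $M\subseteq E(T)$ is the edge cut of some perfect matching cut if and only if $M$ is a perfect matching of $T$}. Indeed, given a perfect matching $M$, fix $r\in V(T)$ and let $X$ be the set of vertices reachable from $r$ along a path using an even number of edges of $M$ and $Y$ the rest; since for every edge $uv$ the path from $r$ to one of $u,v$ is the path to the other extended by $uv$, the side changes across $uv$ exactly when $uv\in M$, so $E(X,Y)=M$; as $|V(T)|\ge 2$ forces $M\neq\emptyset$, both $X$ and $Y$ are nonempty, so $(X,Y)$ is a perfect matching cut. The converse is immediate. Hence for a caterpillar the question reduces to: does it have a perfect matching at all?

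Next I would analyse which perfect matchings a caterpillar can have. A leaf has a unique neighbour, so it must be matched to its spine vertex; therefore every loaded $x_i$ is matched to its leaf, and consequently a bare spine vertex has all its neighbours on the spine and can only be matched to an \emph{adjacent bare} spine vertex. Deleting the loaded spine vertices splits the spine into its maximal runs of consecutive bare vertices (these are exactly the maximal sub-paths of $x_1,\dots,x_k$ with zero leaves, and they include $x_1$ and $x_k$), and each such run must carry a perfect matching built from its own spine edges. A path on $\ell$ vertices has a perfect matching iff $\ell$ is even, and then a unique one. This gives both directions at once: if $T$ has a perfect matching cut then each maximal bare sub-path of the spine is even; conversely, if the at-most-one-leaf and even-run conditions hold, matching every leaf to its spine vertex and taking the unique perfect matching inside each even bare run yields a perfect matching of $T$, which by the reduction step is realised by a perfect matching cut. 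Combined with the preceding facts (non-caterpillar or non-tree interval graphs have no perfect matching cut; two leaves on one spine vertex are impossible) this is the claimed characterisation.

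I expect no genuine obstacle here; the only points requiring care are bookkeeping. One must verify that a loaded spine vertex is truly forced to be matched to its leaf, so that a bare vertex sitting at the boundary of a run cannot "borrow" a loaded neighbour as partner, and one must check that the bare endpoints $x_1,x_k$ are correctly absorbed into the first and last maximal bare runs, so that the even-length requirement genuinely constrains them as well. Both follow immediately from the single observation that leaves are matched to their spine vertices.
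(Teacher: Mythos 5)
Your proposal is correct, and it differs from the paper in one genuine respect: the paper argues directly about the two sides of the cut (a leaf and its unique neighbour must lie on opposite sides, a spine vertex with two leaves is immediately blocked, and the remaining verification is left terse), whereas you first prove a clean reduction lemma — in a tree, a set of edges is the edge cut of a perfect matching cut if and only if it is a perfect matching, via the parity two-colouring from a root — and then characterize which caterpillars have a perfect matching by the forced-matching analysis (leaves matched to their spine vertices, bare runs matched internally, hence even). The forced-matching analysis itself coincides with what the paper's sketch implicitly requires, so the combinatorial core is the same; what your lemma buys is that the ``if'' direction becomes immediate (any perfect matching you construct is automatically realized as a cut, with no separate argument that both sides are nonempty or that the cut condition holds), and it isolates exactly where acyclicity is used — the same statement fails already for $C_6$, as the paper's remark on disconnected perfect matchings shows. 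Your appeal to the facts established just before the statement (an interval graph with a perfect matching cut is a tree, an interval tree is a caterpillar with leafless endpoints of the basic path, and two leaves on one spine vertex are impossible) is legitimate, since the Fact is stated in that context; your own matching argument in any case re-derives the two-leaf exclusion. I see no gap.
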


In particular, caterpillars having a perfect matching cut can be recognized in polynomial time. 
For an arbitrary tree $T$ we can decide whether $T$ has a perfect matching as follows: 
Root $T$ at a vertex $r$ and let $r_1,\ldots, r_k$ be the children of~$r$. Then $T$ has a perfect matching cut if and only if there exists some $1\le i\le k$ such that each subtree $T_j$ rooted at $r_j$, $j\not=i$, has a perfect matching cut, and $T_i-r_i$ has a perfect matching cut. % such that all children of $r_i$ are in the same side. 
This fact implies a bottom-up dynamic programming to decide if $T$ has a perfect matching cut.   

A similar idea works for a large graph class that properly contains all chordal graphs. 
We will show a polynomial-time algorithm for what we call pseudo-chordal graphs. 
The maximal 2-connected subgraphs of a graph are called its blocks, and a block is non-trivial if it contains at least 3~vertices. 

\begin{definition}
A graph is \emph{pseudo-chordal} if, for every non-trivial block $B$,
every edge of $B$ belongs to a triangle.
\end{definition}

Note that chordal graphs are pseudo-chordal, but pseudo-chordal graphs
may contain induced cycles of any length, e.g. take a cycle and for any
two neighbors add a new vertex adjacent to both of them.

\begin{theorem}\label{thm:pseudo}
There is a polynomial-time algorithm deciding if a pseudo-chordal graph
$G$ has a perfect matching cut.
\end{theorem}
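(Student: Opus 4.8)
The plan is to exploit Fact~\ref{cliquefact} to reduce the problem to a dynamic program over a tree built from the block structure of $G$, in the spirit of the bottom-up procedure for plain trees sketched above.

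By Fact~\ref{cliquefact}, in a pseudo-chordal graph every non-trivial block is monochromatic in any perfect matching cut $(X,Y)$; chaining this along shared cut vertices, every \emph{blob} — a maximal union of non-trivial blocks that is connected through shared cut vertices — must lie entirely in $X$ or entirely in $Y$. I would contract each blob to a single \emph{blob node}; since every cycle of $G$ lies inside a block, the resulting graph $\mathcal T$ is a tree (recall $G$ is connected) whose edges are exactly the bridges of $G$, and whose remaining nodes are the \emph{free} vertices, those original vertices lying in no non-trivial block. Since all non-bridge edges are monochromatic, the edge cut $E(X,Y)$ consists solely of bridges, and the perfect-matching requirement translates into: \emph{every original vertex is incident with exactly one bridge whose far endpoint lies on the opposite side}. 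In particular, if some original vertex is incident with no bridge at all (for instance if $G$ is $2$-connected), then $G$ has no perfect matching cut. Computing the blocks, the blobs and $\mathcal T$ takes linear time.

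Next I would root $\mathcal T$ at an arbitrary node; this distinguishes, for each non-root blob node $S$ with vertex set $U_S$, the unique vertex $u_0\in U_S$ incident with the bridge to the parent (unique because $\mathcal T$ is a simple tree). The dynamic program computes, for every node $N$, every side $c\in\{X,Y\}$, and every status bit $s$, whether the original vertices in the subtree $\mathcal T_N$ admit a $2$-colouring that (i) puts $N$ on side $c$; (ii) fulfils the per-vertex bridge condition above for every original vertex strictly below the connecting vertex of $N$; and (iii) leaves the connecting vertex ($N$ itself when $N$ is free, and $u_0$ when $N$ is a blob) either already served from inside $\mathcal T_N$ ($s=\textsf{done}$, which forces the parent onto side $c$) or still in need of its opposite-side bridge from the parent ($s=\textsf{need}$, which forces the parent onto side $\bar c$). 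The merge at a node $N$ on side $c$ inspects each child $N_j$, attached at a vertex $u$ of $N$, in exactly one of the two compatible modes — ``$N_j$ on side $c$, \textsf{done}'' (the bridge $uN_j$ is not a cut edge) or ``$N_j$ on side $\bar c$, \textsf{need}'' (the bridge $uN_j$ is the cut edge serving $N_j$) — and then checks, for every original vertex $u$ of $N$, that exactly one of $u$'s bridges (to children, plus to the parent when $u=u_0$, as prescribed by $s$) reaches a node on side $\bar c$; this decouples over the vertices of $U_S$ and is done in polynomial time. Thus the whole computation is polynomial. Finally, $G$ has a perfect matching cut if and only if the root admits a valid configuration in which \emph{every} original vertex of the root node is served from inside the tree; non-emptiness of the cut is automatic, since a monochromatic colouring would leave every vertex with no opposite-side neighbour.

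The step I expect to be the crux is the bookkeeping at blob nodes. Unlike in the plain-tree case, a blob node packages many original vertices, each of which must independently acquire exactly one opposite-side bridge, and these bridges tie the blob to several of its tree-neighbours simultaneously. One has to verify that, once $\mathcal T$ is rooted, the only information that needs to propagate upward out of $\mathcal T_S$ is the side of $S$ together with the single status bit for $u_0$ — the conditions at all other vertices of $U_S$ being fully decided within $\mathcal T_S$ — and that $\mathcal T$ really is a simple tree, with no two bridges joining the same pair of contracted classes (both facts following once more from every cycle of $G$ lying inside a single block). Granting these structural observations, the correctness of the merge step, and hence of the algorithm, is routine.
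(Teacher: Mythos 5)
Your proposal is correct and follows essentially the same route as the paper: contract each maximal connected union of non-trivial blocks (your ``blobs'' are the paper's components $D_1,\ldots,D_k$), observe via Fact~\ref{cliquefact} that each is monochromatic so only bridges can be cut edges, and run a bottom-up two-state dynamic program on the resulting block-structure tree, where your \textsf{done}/\textsf{need} statuses correspond exactly to the paper's $pmc(\cdot)$ and $m(\cdot)$ values. The extra side label $c$ you carry is redundant by symmetry but harmless, and the structural observations you flag as the crux (only the connecting vertex's status propagates upward; the contracted graph is a simple tree) are exactly what the paper's correctness argument uses.
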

\begin{proof}
We first compute the blocks of $G$ and let $D$ be the
subgraph of $G$ formed by the edges of non-trivial blocks of $G$. 
Let $D_1, D_2,\ldots, D_k$ be the connected components of $D$. 
Note that by collapsing each $D_i$ into a supernode we can treat the
graph~$G$ as having a tree structure $T$ (related to the block
structure) with one node for each $v \in V(G) \setminus V(D)$, and a
supernode for each $D_i$. See Fig.~\ref{fig:Ex}. 
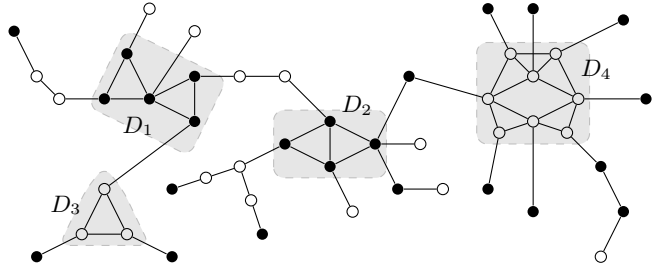
\begin{figure}[ht]
\tikzstyle{vertex}=[draw,circle,inner sep=1.5pt] 
\tikzstyle{vertexX}=[circle,inner sep=1.5pt,fill=black]
\tikzstyle{vertexY}=[draw,circle,inner sep=1.5pt,fill=lightgray]
\begin{center}
\begin{tikzpicture}[scale=.3] 
\node[vertexX] (1) at (1,11) {};
\node[vertex] (2) at (2,9) {};
\node[vertex] (3) at (3,8) {};
\node[vertex] (4) at (7,12) {};
\node[vertex] (5) at (9,11) {};
\node[vertex] (6) at (11,9) {};
\node[vertex] (7) at (13,9) {};
\node[vertexX] (8) at (18.5,9) {};
%\node[vertex] (9) at (20,7) {};
\node[vertex] (10) at (19,6) {};
\node[vertexX] (11) at (18,4) {};
\node[vertex] (12) at (20,4) {};
\node[vertex] (13) at (16,3) {};
\node[vertex] (14) at (11,5) {};
\node[vertex] (15) at (11.5,3.5) {};
\node[vertexX] (16) at (12,2) {};
\node[vertex] (17) at (9.5,4.5) {};
\node[vertexX] (18) at (8,4) {};
%\node[vertex] (19) at (7,5) {};
%\node[vertexX] (20) at (3,5) {};
\node[vertexX] (21) at (2,1) {};
\node[vertexX] (22) at (8,1) {};
\node[vertexX] (23) at (22,4) {};
\node[vertexX] (24) at (24,3) {};
\node[vertexX] (25) at (27,5) {};
\node[vertexX] (26) at (28,3) {};
\node[vertex] (27) at (27,1) {};
\node[vertexX] (28) at (29,8) {};
\node[vertexX] (29) at (28,11.5) {};
\node[vertexX] (30) at (24,12) {};
\node[vertexX] (31) at (22,12) {};
% \node[vertexX] (32) at (20.5,9.5) {};
% \node[vertexX] (33) at (19,11) {};
% \node[vertex] (34) at (17,12) {};

%X1
\filldraw[rounded corners, fill=black!10!white, draw=lightgray, dashed, rotate around={-26:(7,8)}] (4.7,6.5) rectangle (9.7,10.3);
\node[vertexX] (x11) at (5,8) {};
\node[vertexX] (x12) at (6,10) {};
\node[vertexX] (x13) at (7,8) {};
\node[vertexX] (x14) at (9,9) {};
\node[vertexX] (x15) at (9,7) {};
\node at (6.5,6.8) {\small $D_1$};
%X2
\filldraw[rounded corners, fill=black!10!white, draw=lightgray, dashed] (12.5,4.5) rectangle (17.5,7.5);
\node[vertexX] (x21) at (13,6) {};
\node[vertexX] (x22) at (15,7) {};
\node[vertexX] (x23) at (15,5) {};
\node[vertexX] (x24) at (17,6) {};
\node at (16.2,7.7) {\small $D_2$};
%X3
%\filldraw[rounded corners, fill=black!6!white, draw=black!9!white] (3.5,1.5) rectangle (6.5,4.5); 
\filldraw[rounded corners, fill=black!10!white, draw=lightgray, dashed] (3,1.5)--(7,1.5)--(5.8,4)--(5,5)--(4.2,4)--cycle; %(3,1.5)--cycle; 
\node[vertex] (x31) at (5,4) {};
\node[vertex] (x32) at (6,2) {};
\node[vertex] (x33) at (4,2) {};
\node at (3.3,3.3) {\small $D_3$}; %(5,1.2) {\small $D_3$};
%X4
\filldraw[rounded corners, fill=black!10!white, draw=lightgray, dashed] (21.5,6) rectangle (26.5,10.5); 
\node[vertex] (x41) at (22,8) {};
\node[vertex] (x42) at (23,10) {};
\node[vertex] (x43) at (25,10) {};
\node[vertex] (x44) at (26,8) {};
\node[vertex] (x45) at (24,7) {};
\node[vertex] (a) at (24,9) {};
\node[vertex] (b) at (25.5,6.5) {};
\node[vertex] (c) at (22.5,6.5) {};
\node at (26.8,9.3) {\small $D_4$};

\draw (x11)--(x12)--(x13)--(x14)--(x15)--(x13)--(x11);
\draw (x22)--(x21)--(x23)--(x24)--(x22)--(x23);
\draw (x31)--(x32)--(x33)--(x31);
\draw (1)--(2)--(3)--(x11);
\draw (4)--(x12); \draw (5)--(x13); 
\draw (x14)--(6)--(7)--(x22); 
%\draw (9)--(8)--(x24); 
\draw (8)--(x24);
\draw (10)--(x24); \draw (12)--(11)--(x24); \draw (13)--(x23);
\draw (16)--(15)--(14)--(x21);
\draw (18)--(17)--(14);
%\draw (x15)--(19)--(x31);
\draw (x15)--(x31);
%\draw (20)--(x31); 
\draw (21)--(x33); \draw (22)--(x32);
\draw (x41)--(x42)--(x43)--(x44)--(x45)--(x41);
\draw (x41)--(a)--(x42); \draw (x43)--(a)--(x44);
\draw (x44)--(b)--(x45); \draw (x45)--(c)--(x41);
%\draw (9)--(x41)--(32)--(33)--(34);
\draw (8)--(x41);
\draw (23)--(c); \draw (24)--(x45); \draw (b)--(25)--(26)--(27);
\draw (28)--(x44); \draw (29)--(x43); \draw (30)--(a); \draw (31)--(x42);  
\end{tikzpicture}
\caption{A pseudo-chordal graph and perfect matching cut given by $(X,Y)$
with $X$ being black vertices. Note the tree structure composed of (i)
those vertices that do not belong to a clique of size 3 and (ii) the
four supernodes $D_1,D_2,D_3,D_4$.}\label{fig:Ex}
\end{center}
\end{figure}

Note that since~$G$
is pseudo-chordal then by Fact~\ref{cliquefact} all the vertices in a
fixed supernode $D_i$ must be on the same side in any perfect matching
cut of $G$.

Our algorithm will pick a root $R$ of $T$ and proceed by bottom-up
dynamic programming on the rooted tree $T$. Each node $S$ of $T$ will be
viewed as the set of vertices it represents in $G$.
If $S$ is not the root of $T$ then we denote by $r(S)$ the unique vertex
of $S$ that has a parent in~$T$. For each node $S$ of $T$ we will
compute two boolean values that concern the subgraph~$G_S$ of $G$
induced by vertices of $G$ contained in the subtree of $T$ rooted at
$S$. These boolean values are defined as follows:

\begin{itemize}
\item $pmc(S)=\mathtt{true}$ if and only if $G_S$ has a perfect matching cut
\item $m(S)=\mathtt{true}$ if and only if $G_S \setminus r(S)$ has a perfect
matching cut where all vertices of $S \setminus r(S)$ are on the same
side.
\end{itemize}

We first initialize $pmc(S)$ and $m(S)$ to $\mathtt{false}$ for all nodes $S$ of
$T$. 
For a leaf $S$ of $T$ we set $m(S)=\mathtt{true}$ if $|S|=1$, i.e. if $S$ is not
a supernode.

Consider an inner node $S$ of $T$, with $S=\{v_1,\ldots,v_q\} \subseteq
V(G)$. In the rooted tree $T$, let the children of $S$ that contain a
neighbor of $v_i$ be $C(v_i)$ (note that each child of $S$ in $T$ has a
unique vertex that has a unique neighbor $v_i \in S$). Assuming the
values $pmc(\cdot)$ and~$m(\cdot)$ have been computed for all children of $S$, we do
the following:

\begin{itemize}
\item set $pmc(S)=\mathtt{true}$ if for each $v_i \in S$ we have
$C(v_i)=\{S_1,\ldots,S_k\}$ with $k \geq 1$ and we can find a child with
$m(S_i)=\mathtt{true}$ such that $pmc(S_j)=\mathtt{true}$ for the other $k-1$ children $j
\neq i$.
\item set $m(S)=\mathtt{true}$ if (i) for each $v_i \in S \setminus r(S)$ we have
$C(v_i)=\{S_1,\ldots,S_k\}$ with $k \geq 1$ and we can find a child with
$m(S_i)=\mathtt{true}$ such that $pmc(S_j)=\mathtt{true}$ for the other $k-1$ children $j
\neq i$, and (ii) for every child $S' \in C(r(S))$ we have
$pmc(S')=\mathtt{true}$.
\end{itemize}

For the root $R$ of $T$ we update $pmc(R)$ but not $m(R)$ since $r(R)$
is not defined. When we are done with the bottom-up dynamic programming,
then for the root $R$ of $T$ we note that $G=G_R$ so that by the
definition of the values $G$ has a perfect matching cut if and only if
$pmc(R)=\mathtt{true}$.

The correctness follows by structural induction on the tree $T$. By
definition, $pmc(S)=\mathtt{true}$ (respectively $m(S)=\mathtt{true}$) if and only if
there is a cut of $G_S$ so that every node (respectively every node
except $r(S)$) has a single neighbor, its \lq mate\rq, on the other side of
the cut. The values at leaves are initialized correctly according to
this definition. At an inner node $S$ we inductively assume the values
at children are correct and end up setting $pmc(S)$ to $\mathtt{true}$ if and only
if $G_S$ has a perfect matching cut, since for each node in $S$ we
require a single child neighbor that needs a mate, while all other child
neighbors are required to already have a mate. Similarly for $m(S)$ but
now all children of $r(S)$ are required to already have a mate. Since
each node $v$ of $S$ is a cut vertex of $G$ separating $G_S$ so that
each child of $S$ defines its own unique component, we can merge all the
cuts in all children while keeping all the nodes of $S$ on the same side
of the cut, to satisfy Fact~\ref{cliquefact} that requires all the nodes
in $S$ to be on the same side of the cut.
The runtime is clearly polynomial.\qed
\end{proof}

\section{Conclusion}\label{sec:conclusion}
%=========================================
We have shown that, assuming ETH, there is no $O^*(2^{o(n)})$-time algorithm for \PMC\ even when restricted to $n$-vertex bipartite graphs, and that \PMC\ remains $\NP$-complete when restricted to bipartite graphs of maximum degree~3 and arbitrarily large girth. 
This implies that \PMC\ remains $\NP$-complete when restricted to $H$-free graphs where $H$ is any fixed graph having a vertex of degree at least~4 or a cycle. This suggests the following problem for further research:
\begin{quote}
Let $F$ be a fixed forest with maximum degree at most~3. What is the computational complexity of \PMC\ restricted to $F$-free graphs?
\end{quote}
We have proved a first polynomial case for this problem where $F$ is a certain 6-vertex tree, including claw-free graphs and graphs without an induced 5-path. %It would be interesting to obtain a dichotomy, at least for paths.

Our hardness result also suggests studying \PMC\ restricted to graphs without long induced cycles:
\begin{quote}
What is the computational complexity of \PMC\ on $k$-chordal graphs?
\end{quote}
It follows from our results that \PMC\ is polynomially solvable for 3-chordal graphs.

We have also given an exact branching algorithm for \PMC\ running in $O^*(1.2721^n)$ time. It is natural to ask whether the running time of the branching algorithm can be improved. %For graphs with large minimum degree, with our reduction rules, the fast branching algorithm for matching cuts in \cite{ChenHLLP21} should also work for perfect matching cuts.   
Finally, as for matching cuts, also for perfect matching cuts it would be interesting to study counting and enumeration as well as FPT and kernelization algorithms.

\bibliography{perfectMC-arXiv}

\begin{thebibliography}{10}

\bibitem{AravindKK17}
N.~R. Aravind, Subrahmanyam Kalyanasundaram, and Anjeneya~Swami Kare.
\newblock On structural parameterizations of the matching cut problem.
\newblock In {\em Combinatorial Optimization and Applications - 11th
  International Conference, {COCOA}~2017, Shanghai, China, December 16-18,
  2017, Proceedings, Part {II}}, pages 475--482, 2017.
\newblock \href {https://doi.org/10.1007/978-3-319-71147-8\_34}
  {\path{doi:10.1007/978-3-319-71147-8\_34}}.

\bibitem{AravindS21}
N.~R. Aravind and Roopam Saxena.
\newblock An {FPT} algorithm for matching cut.
\newblock {\em CoRR}, abs/2101.06998, 2021.
\newblock URL: \url{https://arxiv.org/abs/2101.06998}, \href
  {http://arxiv.org/abs/2101.06998} {\path{arXiv:2101.06998}}.

\bibitem{Belmonte}
R{\'{e}}my Belmonte and Martin Vatshelle.
\newblock Graph classes with structured neighborhoods and algorithmic
  applications.
\newblock {\em Theor. Comput. Sci.}, 511:54--65, 2013.
\newblock \href {https://doi.org/10.1016/j.tcs.2013.01.011}
  {\path{doi:10.1016/j.tcs.2013.01.011}}.

\bibitem{Bonsma09}
Paul~S. Bonsma.
\newblock The complexity of the matching-cut problem for planar graphs and
  other graph classes.
\newblock {\em Journal of Graph Theory}, 62(2):109--126, 2009.
\newblock \href {https://doi.org/10.1002/jgt.20390}
  {\path{doi:10.1002/jgt.20390}}.

\bibitem{BouquetP20}
Valentin Bouquet and Christophe Picouleau.
\newblock The complexity of the perfect matching-cut problem.
\newblock {\em CoRR}, abs/2011.03318, 2020.
\newblock URL: \url{https://arxiv.org/abs/2011.03318}, \href
  {http://arxiv.org/abs/2011.03318} {\path{arXiv:2011.03318}}.

\bibitem{Bui}
Binh{-}Minh Bui{-}Xuan, Jan~Arne Telle, and Martin Vatshelle.
\newblock Fast dynamic programming for locally checkable vertex subset and
  vertex partitioning problems.
\newblock {\em Theor. Comput. Sci.}, 511:66--76, 2013.
\newblock \href {https://doi.org/10.1016/j.tcs.2013.01.009}
  {\path{doi:10.1016/j.tcs.2013.01.009}}.

\bibitem{ChenHLLP21}
Chi{-}Yeh Chen, Sun{-}Yuan Hsieh, Ho{\`{a}}ng{-}Oanh Le, Van~Bang Le, and
  Sheng{-}Lung Peng.
\newblock Matching cut in graphs with large minimum degree.
\newblock {\em Algorithmica}, 83(5):1238--1255, 2021.
\newblock \href {https://doi.org/10.1007/s00453-020-00782-8}
  {\path{doi:10.1007/s00453-020-00782-8}}.

\bibitem{Chvatal84}
Vasek Chv{\'{a}}tal.
\newblock Recognizing decomposable graphs.
\newblock {\em Journal of Graph Theory}, 8(1):51--53, 1984.
\newblock \href {https://doi.org/10.1002/jgt.3190080106}
  {\path{doi:10.1002/jgt.3190080106}}.

\bibitem{FominK10}
Fedor~V. Fomin and Dieter Kratsch.
\newblock {\em Exact Exponential Algorithms}.
\newblock Springer, 2010.
\newblock URL: \url{https://www.springer.com/gp/book/9783642165320}.

\bibitem{GolovachKKL21}
Petr~A. Golovach, Christian Komusiewicz, Dieter Kratsch, and Van~Bang Le.
\newblock Refined notions of parameterized enumeration kernels with
  applications to matching cut enumeration.
\newblock In Markus Bl{\"{a}}ser and Benjamin Monmege, editors, {\em 38th
  International Symposium on Theoretical Aspects of Computer Science, {STACS}
  2021, March 16-19, 2021, Saarbr{\"{u}}cken, Germany (Virtual Conference)},
  volume 187 of {\em LIPIcs}, pages 37:1--37:18. Schloss Dagstuhl -
  Leibniz-Zentrum f{\"{u}}r Informatik, 2021.
\newblock \href {https://doi.org/10.4230/LIPIcs.STACS.2021.37}
  {\path{doi:10.4230/LIPIcs.STACS.2021.37}}.

\bibitem{GomesS19}
Guilherme C.~M. Gomes and Ignasi Sau.
\newblock Finding cuts of bounded degree: Complexity, {FPT} and exact
  algorithms, and kernelization.
\newblock In {\em 14th International Symposium on Parameterized and Exact
  Computation, {IPEC} 2019, September 11-13, 2019, Munich, Germany}, pages
  19:1--19:15, 2019.
\newblock \href {https://doi.org/10.4230/LIPIcs.IPEC.2019.19}
  {\path{doi:10.4230/LIPIcs.IPEC.2019.19}}.

\bibitem{Graham70}
Ron~L. Graham.
\newblock On primitive graphs and optimal vertex assignments.
\newblock {\em Ann.\ N.\ Y.\ Acad.\ Sci.}, 175(1):170--186, 1970.

\bibitem{HeggernesT98}
Pinar Heggernes and Jan~Arne Telle.
\newblock Partitioning graphs into generalized dominating sets.
\newblock {\em Nord. J. Comput.}, 5(2):128--142, 1998.

\bibitem{ImpagliazzoP01}
Russell Impagliazzo and Ramamohan Paturi.
\newblock On the complexity of k-sat.
\newblock {\em J. Comput. Syst. Sci.}, 62(2):367--375, 2001.
\newblock \href {https://doi.org/10.1006/jcss.2000.1727}
  {\path{doi:10.1006/jcss.2000.1727}}.

\bibitem{ImpagliazzoPZ01}
Russell Impagliazzo, Ramamohan Paturi, and Francis Zane.
\newblock Which problems have strongly exponential complexity?
\newblock {\em J. Comput. Syst. Sci.}, 63(4):512--530, 2001.
\newblock \href {https://doi.org/10.1006/jcss.2001.1774}
  {\path{doi:10.1006/jcss.2001.1774}}.

\bibitem{Kang}
Dong~Yeap Kang, O{-}joung Kwon, Torstein J.~F. Str{\o}mme, and Jan~Arne Telle.
\newblock A width parameter useful for chordal and co-comparability graphs.
\newblock {\em Theor. Comput. Sci.}, 704:1--17, 2017.
\newblock \href {https://doi.org/10.1016/j.tcs.2017.09.006}
  {\path{doi:10.1016/j.tcs.2017.09.006}}.

\bibitem{KomusiewiczKL20}
Christian Komusiewicz, Dieter Kratsch, and Van~Bang Le.
\newblock Matching cut: Kernelization, single-exponential time fpt, and exact
  exponential algorithms.
\newblock {\em Discret. Appl. Math.}, 283:44--58, 2020.
\newblock \href {https://doi.org/10.1016/j.dam.2019.12.010}
  {\path{doi:10.1016/j.dam.2019.12.010}}.

\bibitem{KratschL16}
Dieter Kratsch and Van~Bang Le.
\newblock Algorithms solving the matching cut problem.
\newblock {\em Theor. Comput. Sci.}, 609:328--335, 2016.
\newblock \href {https://doi.org/10.1016/j.tcs.2015.10.016}
  {\path{doi:10.1016/j.tcs.2015.10.016}}.

\bibitem{LeL19}
Ho{\`{a}}ng{-}Oanh Le and Van~Bang Le.
\newblock A complexity dichotomy for matching cut in (bipartite) graphs of
  fixed diameter.
\newblock {\em Theor. Comput. Sci.}, 770:69--78, 2019.
\newblock \href {https://doi.org/10.1016/j.tcs.2018.10.029}
  {\path{doi:10.1016/j.tcs.2018.10.029}}.

\bibitem{Moshi89}
Augustine~M. Moshi.
\newblock Matching cutsets in graphs.
\newblock {\em Journal of Graph Theory}, 13(5):527--536, 1989.
\newblock \href {https://doi.org/10.1002/jgt.3190130502}
  {\path{doi:10.1002/jgt.3190130502}}.

\bibitem{Prosk}
Jan~Arne Telle and Andrzej Proskurowski.
\newblock Algorithms for vertex partitioning problems on partial
  \emph{k}-trees.
\newblock {\em {SIAM} J. Discret. Math.}, 10(4):529--550, 1997.
\newblock \href {https://doi.org/10.1137/S0895480194275825}
  {\path{doi:10.1137/S0895480194275825}}.

\bibitem{Vatshelle}
Martin Vatshelle.
\newblock {\em New width parameters of graphs}.
\newblock PhD thesis, University of Bergen, Norway, 2012.
\newblock URL: \url{https://bora.uib.no/bora-xmlui/handle/1956/6166}.

\end{thebibliography}
%============================

\end{document}